\newtheorem{theorem}{Theorem}[section]
\newtheorem{remark}{Remark}
\newtheorem{appxlem}{Lemma}[section]
\newtheorem{appxthm}{Theorem}[section]
\newcommand{\setdef}[2]{\{#1 \; : \; #2\}}
\newcommand{\until}[1]{\{1,\dots,#1\}}
\newcommand{\real}{\mathbb{R}}
\newcommand{\transpose}{\mathsf{T}} 
\newcommand{\mc}{\mathcal}
\newcommand{\1}{\mathds{1} }
\DeclareSymbolFont{bbold}{U}{bbold}{m}{n}
\DeclareSymbolFontAlphabet{\mathbbold}{bbold}
\newcommand\oprocendsymbol{\hbox{$\square$}}
\newcommand\oprocend{\relax\ifmmode\else\unskip\hfill\fi\oprocendsymbol}
\newcommand*{\QEDA}{\hfill\ensuremath{\blacksquare}}%
\DeclareMathOperator{\im}{i}
\renewcommand{\baselinestretch}{0.977}
\begin{document}
\title{\bf A Framework to Control Functional Connectivity in the Human
  Brain} \author{Tommaso Menara, Giacomo Baggio, Danielle S. Bassett,
  and Fabio Pasqualetti \thanks{This material is based upon work
    supported in part by ARO 71603NSYIP, and in part by NSF
    BCS1631112. Tommaso Menara, Giacomo Baggio and Fabio Pasqualetti
    are with the Department of Mechanical Engineering, University of
    California at Riverside,
    \{\href{mailto:tomenara@engr.ucr.edu}{\texttt{tomenara}},
    \href{mailto:gbaggio.ucr.edu}{\texttt{gbaggio}},
    \href{mailto:fabiopas@engr.ucr.edu}{\texttt{fabiopas\}@engr.ucr.edu.}}
    Danielle S. Bassett is with the Department of Bioengineering, the
    Department of Electrical and Systems Engineering, the Department
    of Physics and Astronomy, the Department of Psychiatry, and the
    Department of Neurology, University of Pennsylvania,
    \href{mailto:mailto:dsb@seas.upenn.edu}{\texttt{dsb@seas.upenn.edu.}}}}
\maketitle


\begin{abstract}
  In this paper, we propose a framework to control brain-wide
  functional connectivity by selectively acting on the brain's
  structure and parameters. Functional connectivity, which measures
  the degree of correlation between neural activities in different
  brain regions, can be used to distinguish between healthy and
  certain diseased brain dynamics and, possibly, as a control
  parameter to restore healthy functions. In this work, we use a
  collection of interconnected Kuramoto oscillators to model
  oscillatory neural activity, and show that functional connectivity
  is essentially regulated by the degree of synchronization between
  different clusters of oscillators. Then, we propose a minimally
  invasive method to correct the oscillators' interconnections and
  frequencies to enforce arbitrary and stable synchronization patterns
  among the oscillators and, consequently, a desired pattern of
  functional connectivity. Additionally, we show that our
  synchronization-based framework is robust to parameter mismatches
  and numerical inaccuracies, and validate it using a realistic
  neurovascular model to simulate neural activity and functional
  connectivity in the human brain.
\end{abstract}

\section{Introduction}\label{sec: introduction}
The structural (i.e., matrix of anatomical connections between brain
regions) and functional (i.e., matrix of correlation coefficients
between the activity of brain regions) connectivity of the brain vary
across healthy individuals and those affected by neurological or
psychiatric disorders, and can be used as biomarkers to detect or
predict pathological conditions. While structural connectivity changes
rather slowly over time and can be measured accurately via diffusion
imaging techniques \cite{HP-CL-GX-MR-HCJ-WVJ-SO:08}, functional
connectivity depends on the instantaneous neural activity and is
affected, for instance, by the tasks being performed and external
stimuli \cite{AZ-AF-EB:12}. Today, common measures of functional
connectivity rely on resting-state functional magnetic resonance
imaging (rs-fMRI) timeseries to quantify the level of correlated
activity between brain regions. The relationships between structural
and functional connectivity have recently received considerable
attention \cite{CJH-RK-MB-OS:07, COB-SP-GJP-MBM-STG-DSB-VMP:18}, and
the tantalizing idea of controlling functional states by leveraging or
modifying brain structure has given birth to a new, thrilling, field
of research~\cite{SG-FP-MC-QKT-BYA-AEK-JDM-JMV-MBM-STG-DSB:15,
  TM-DSB-FP:17, JS-ANK-TM-AEK-JMS-SRD-RG-JT-BL-KAD-FP-TL-DSB:19}.

In this paper, we leverage the connection between structural and
functional connectivity, and propose a framework to control functional
connectivity by selectively modifying structural connectivity and the
regions' intrinsic frequencies (see Fig.~\ref{fig: FC control}). In
particular, building on prior studies
\cite{GD-VK-ARM-OS-RK:09,JC-EH-OS-GD:11}, we model the brain's neural
activity as the phases of a collection of interconnected Kuramoto
oscillators, and postulate that the level of functional connection
between two regions is proportional to the level of synchronization
between the phases of the oscillators associated with the two
regions. Then, we derive conditions and methods to tune the
oscillators' interconnection weights and natural frequencies so as to
enforce arbitrary synchronization patterns and, consequently,
brain-wide functional connectivity. We remark that the control
mechanisms used in our framework are biologically plausible.  For
instance, changes in the spontaneous neural activity (i.e.,
oscillators' frequencies) are typical of the brain, involve natural
modifications in regional metabolism of the neurons, and can
alternatively be induced by a number of non-invasive stimulation
techniques \cite{RP-MAN-CCR:18}. Changes to the
structural interconnections (i.e., oscillators' interconnections),
instead, can arise from different chemical or electrical mechanisms
including, at the microscale, Hebbian plasticity \cite{AK-MFB:94} and
short-term synaptic facilitation \cite{AMT:00}.

\begin{figure}[t]
  \centering
    \includegraphics[width=1\columnwidth]{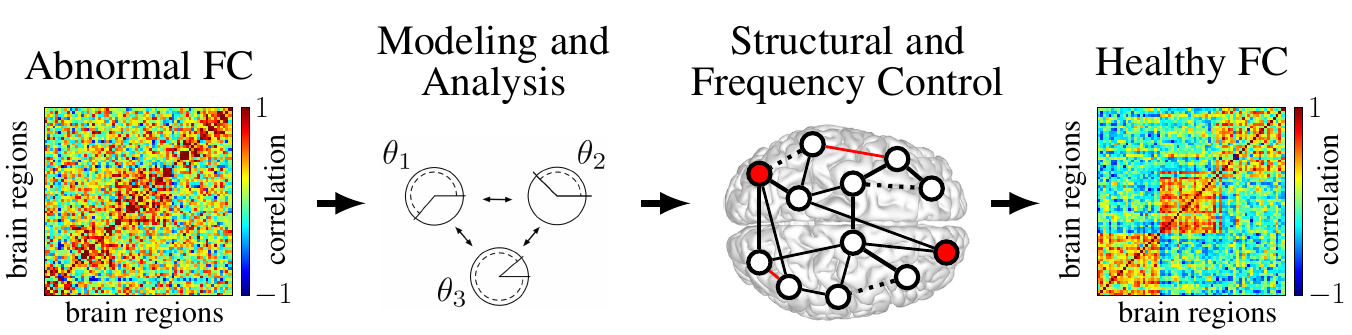}
    \caption{This paper proposes a framework to restore healthy
      patterns of brain-wide functional connectivity by selectively
      acting on the brain's structure and parameters. Using a network
      of heterogeneous Kuramoto oscillators to model the brain's
      neural activity, we design and validate a minimally invasive
      method to correct the oscillators' interconnections and
      frequencies to obtain a desired and stable pattern of functional
      connectivity.}
  \label{fig: FC control}
\end{figure}

\noindent\textbf{Related work.} The discovery of oscillatory
or rhythmic brain activity dates back almost a century. Yet,
control-theoretic studies that exhaust the oscillatory nature of brain
states have been sparse and of relatively recent date. Some authors
focus on localized desynchronization of neural activity
\cite{PAT:03,AF-AC-EP-FLL:12,Pecora2014}, which is desirable in
individuals affected by epilepsy or Parkinson's disease, and others
use synchronization phenomena to describe cognitive and functional
brain states \cite{MJ-XY-MP-HS-KHJ:18,YQ-YK-MC:18,EN-JC:19}. To the
best of our knowledge, a framework to control the pattern of
brain-wide functional connectivity is still missing, and is proposed
for the first time in this paper.

At the core of our framework to model and control functional
connectivity is the concept of \emph{cluster synchronization} in a
network of oscillators, where groups of oscillators behave cohesively
but independently from other clusters.  For the case of oscillators
with Kuramoto dynamics as used in this work,
\cite{CF-AC-FP:17,YQ-YK-MC:18b} explore approximate notions of cluster
synchronization in simplified configurations, while
\cite{LT-CF-MI-DSB-FP:17} provides exact invariance conditions for
arbitrary cluster synchronization manifolds. Our recent work
introduces rigorous \cite{TM-GB-DSB-FP:19a} and approximate
\cite{TM-GB-DSB-FP:19} stability conditions for cluster
synchronization, which are also used here. Compared to the above
references, this paper focuses on the control of cluster
synchronization, rather than on its enabling~conditions.

\noindent\textbf{Paper contribution.} The contributions of this paper
are twofold. On the technical side, we formulate and solve a network
optimization problem to enforce stable cluster synchronization among
interconnected Kuramoto oscillators (Section \ref{sec: section 3}). We
provide a two-step procedure to compute the smallest (as measured by
the Frobenius norm) perturbation of the network weights and the
oscillators' natural frequencies so as to achieve a desired and
arbitrary synchronization pattern. Notably, the proposed algorithm
allows for the modification of only a selected subset of the network
parameters, as typically constrained in applications. We also prove
that cluster synchronization is robust to parameter mismatches and
numerical inaccuracies, which complements the theoretical derivations
in \cite{TM-GB-DSB-FP:19a,TM-GB-DSB-FP:19}, and strengthen the
applicability of our control methods to work in practice.

On the application side, this work contains the first mathematically
rigorous and neurologically plausible framework to control functional
connectivity in the brain, and takes a significant step to fill the
gap between empirical studies on oscillatory neural activity
\cite{GD-VK-ARM-OS-RK:09,CDH-AZS-MP-MC-ECL:17,JC-EH-OS-GD:11} and the
recent technical body of work inspired by neural synchronization
\cite{CF-AC-FP:17, LT-CF-MI-DSB-FP:17,
  TM-GB-DSB-FP:19a,TM-GB-DSB-FP:19}. In Section \ref{sec: section 4},
we apply our control technique to an empirically-reconstructed
structural brain network, and validate our results by computing the
correlation of resting-state fMRI signals obtained through a realistic
hemodynamic~model. As a minor contribution, our work extends
\cite{JC-EH-OS-GD:11} by allowing heterogeneous Kuramoto~dynamics.

\noindent\textbf{Mathematical notation.} The sets $\real_{>0}$,
$\mathbb{S}^1$ and $\mathbb{T}^n$ denote the positive real numbers,
the unit circle, and the $n$-dimensional torus, respectively. We
represent the vector of all ones with $\1$. The Frobenius and
$\ell_{2}$ norms are denoted as $\| \cdot \|_{\text{F}}$ and
$\| \cdot \|$, respectively, and $A \circ B$ is the Hadamard product
between matrices $A$ and $B$. A \mbox{(block-)diagonal} matrix is
denoted by $\mathrm{(blk)diag}(\cdot)$. We let $\im = \sqrt{-1}$.  Let
$A\ge 0$ represent an element-wise inequality on the entries of $A$,
$A^+$ the element-wise nonnegative part of $A$, and $A\succ0$ a
positive definite matrix $A$. We let $\lambda_i(A)$ and $\sigma_i(A)$
denote the $i$-th eigenvalue and the $i$-th singular value of
$A \in \real^{n\times n}$, respectively, and
$\lambda_{\text{max}}(A) = \max_i |\lambda_i(A)|$ and
$\lambda_{\text{min}}(A) = \min_i |\lambda_i(A)|$. Finally, we let
$\overline{\lambda}(A) = \frac{1}{n}\sum_i\lambda_i(A)$ and
$\overline \sigma(A) = \frac{1}{n}\sum_i \sigma_i(A)$.

\section{Problem setup and preliminary notions}\label{sec: setup}
The aim of this work is to control network parameters so that groups
of brain regions exhibit a high degree of functional connectivity. In
this context, functional interactions are defined as the pairwise
correlation between hemodynamic signals recorded in two brain
regions. One model used to simulate such hemodynamic signals is
described by a set of nonlinear differential equations
\cite{DM-MGP-CDG-GLR-MC:07} that can be approximated in the frequency
domain as a linear low-pass filter \cite{JC-EH-OS-GD:11}. Because the
only input to such hemodynamic model is the oscillatory neural
activity, the formation of strongly (functionally) connected brain
regions can be promoted by controlling the synchronization level of
their neural dynamics. We follow \cite{JC-EH-OS-GD:11} to model such
neural dynamics with a sparse network of heterogeneous Kuramoto
oscillators that are connected to each other according to the
anatomical architecture of the human brain, more specifically known as
white matter tracts.\footnote{We assume that at each node of a
  structural brain network there exists a community of excitatory and
  inhibitory neurons whose dynamical state is in a regime of
  self-sustained oscillation. In other words, the neurons' firing
  rates delineate a limit cycle, and their dynamics can be
  approximated by a single variable, which is the angle (or phase) on
  this cycle.}  Ultimately, the problem of generating desired patterns
of functional connectivity reduces to the one of controlling cluster
synchronization in a network of heterogeneous Kuramoto oscillators.

To be precise, let $\mc G = (\mc V, \mc E)$ be a weighted digraph,
where $\mc V = \until{n}$ and $\mc E \subseteq \mc V \times \mc V$
represent the oscillators, or nodes, and their interconnection edges,
respectively. The $i$-th oscillator's dynamics reads as:
\begin{align}\label{eq: kuramoto}
  \dot \theta_i = \omega_i + \sum_{j \neq i} a_{ij} \sin(\theta_{j}-\theta_{i}),
\end{align}
where $\omega_i \in \real_{> 0}$ denotes the natural frequency of the
$i$-th oscillator, $\theta_i \in \mathbb{S}^1$ is its phase,
$a_{ij} \in \real_{> 0}$ is the weight of the edge $(j,i) \in \mc E$,
with $a_{ij} = 0$ when $(j,i) \not\in \mc E$, and $A = [a_{ij}]$ is
the weighted adjacency matrix of $\mc G$.

To characterize synchronized trajectories among subsets of
oscillators, let $\mc P = \{\mc P_1, \dots, \mc P_m\}$ be a nontrivial
partition of $\mc V$, where each cluster contains at least two
oscillators and its graph is strongly connected.\footnote{As the brain
  is densely connected \cite{DSB-PZ-JIG:18}, this assumption is not
  restrictive.}  We say that a network exhibits cluster
synchronization when the oscillators can be partitioned so that the
phases of the oscillators in each cluster evolve
identically. Formally, we define the \emph{cluster synchronization
  manifold} associated with the partition $\mc P$ as
\begin{align*}
  \mc S_{\mc P} = \setdef{\theta \in \mathbb{T}^n }{ \theta_i =
  \theta_j  \text{ for all } i,j \in \mc P_k, k = 1,\dots,m} .
\end{align*}
Then, the network is cluster-synchronized with partition $\mc P$ when
the phases of the oscillators belong to $\mc S_{\mc P}$ at all times.
Without loss of generality, the oscillators are labeled so that $\mc P_k = \{\sum_{\ell=1}^{k-1}|\mc P_\ell| + 1,\dots,\sum_{\ell=1}^k|\mc P_\ell|\}$, where $|\mc P_\ell|$ denotes the cardinality of the set $\mc P_\ell$.

Because our control framework 
leverages conditions for the invariance and stability of the cluster synchronization manifold to modify the network weights and oscillators' natural frequencies, we briefly recall useful preliminary results that have recently  been established in \cite{LT-CF-MI-DSB-FP:17,TM-GB-DSB-FP:19a,TM-GB-DSB-FP:19}. Specifically, given a desired network partition $\mc P = \{\mc P_1, \dots , \mc P_m\}$, invariance of $\mc S_{\mc P}$ is guaranteed by the following conditions:
\begin{itemize}
  \item[(C1)] The natural
  frequencies satisfy $\omega_i = \omega_j$ for
  every $i,j \in \mc P_k$ and $k~\in~\until{m}$. Equivalently, $B_\text{span}^\transpose \omega = 0$,
  \end{itemize}
  where $B_\text{span} \in \real^{|\mc V| \times |\bigcup_k \mc E_{\text{span},k}|}$ is the incidence matrix of $\bigcup_{k=1}^m\mc T_{k}$, with $\mc T_k = (\mc P_k, \mc E_{\text{span},k})$ being a spanning tree of the digraph $\mc G_k$ of the isolated cluster $\mc P_k$;

  \begin{itemize}
\item[(C2)]  The network weights satisfy
  $\bar{V}_{\mc P}^{\transpose}\bar A V_{\mc P}=0$,
  \end{itemize}
where $V_{\mc P}\in \real^{n\times m}$ is the characteristic matrix of the network defined as $V_{\mc P} = \left[{v_1}/{\|v_1\|},\dots, {v_m}/{\|v_m\|}\right]$, with
\begin{equation*}
v_i^\transpose =  [ \underbrace{0,  \dots,  0}_{\sum_{j=1}^{i-1}| \mc P_j|}, \underbrace{1,  \dots,  1}_{|\mc P_i|}, \underbrace{0,  \dots,  0}_{\sum_{j=1+1}^{n}| \mc P_j|} ], 
\end{equation*}
$\bar V_{\mc P}\in\real^{n \times (n-m)}$ is an orthonormal basis of the orthogonal subspace to the image of $V_{\mc P}$, and $\bar A = A - A \circ V_{\mc P} V_{\mc P}^\transpose$ is the matrix of inter-cluster connections only (see also \cite{LT-CF-MI-DSB-FP:17}). 

We assume that the \emph{isolated} clusters are locally stable:
\begin{itemize}
\item[(A$1$)] The dynamics \eqref{eq: kuramoto}, with $a_{ij}=0$ when $i,j$ belong to different clusters, converges exponentially fast to $\mc S_{\mc P}$.
\end{itemize}
Notice that Assumption (A$1$) is satisfied when $\mc G_{k}$ has symmetric weights and condition (C$1$) holds \cite[Lemma 3.1]{TM-GB-DSB-FP:19a}\cite[Theorem 5.1]{Doerfler2014}. In our case, (A$1$) is not restrictive because structural brain networks are typically symmetric \cite{SG-FP-MC-QKT-BYA-AEK-JDM-JMV-MBM-STG-DSB:15}.\footnote{In a general case, one can ensure that Assumption (A$1$)  is satisfied simply by pairing the control mechanism developed in the next session with an independent one that makes intra-cluster connections symmetric.}

  Let $\omega^{(k \ell)}$ denote the natural frequency difference between any two nodes in disjoint clusters $\mc P_k$ and $\mc P_\ell$. If (C$1$) and (C$2$) hold, then a tight approximate condition for $\mc S_{\mc P}$ to be locally exponentially stable is \cite{TM-GB-DSB-FP:19}:
  \begin{itemize}
  \item[(C3)] The natural frequencies and the network weights are such that $\lambda_{\text{max}}(\Xi(A,\omega)) < 1$, with $\Xi = [\xi_{k\ell}]$ and
\end{itemize}
  \begin{align}\label{eq:approx gains av}
    \xi_{k\ell} \!= \!
    \begin{cases} 
      \nu_{k\ell}\overline{\sigma}(G_{k}(\im \omega^{(k\ell)})), &
      \hspace{-0.2cm}\text{if }  \overline{\lambda}(J_{k}) \le \overline{\lambda}(J_{\ell}),\\[.35em]
      \nu_{k\ell}\frac{
        \overline{\sigma}(G_{k}(0))}{\overline{\sigma} (G_{\ell}(0))}
      \overline{\sigma}(G_{\ell}(\im \omega^{(k\ell)})), &\hspace{-0.2cm}\text{if }
      \overline{\lambda}(J_{\ell}) \!< \overline{\lambda}(J_{k}),
    \end{cases}
  \end{align}
  where $G_{k}(s)=(sI-J_{k})^{-1}$, $J_k$ is the Hurwitz stable Jacobian matrix of the intra-cluster phase difference dynamics and $\nu_{k\ell}$ is a function of the inter-cluster weights.
Due to space constraints, we refer the interested reader to \cite{LT-CF-MI-DSB-FP:17,TM-GB-DSB-FP:19} for a detailed discussion on conditions (C1), (C2), and (C3).

\section{Control of cluster~synchronization}\label{sec: section 3}

In this section, we propose a control mechanism to obtain a prescribed and robust configuration of synchronized oscillatory patterns.
Towards this aim, we consider a network $\mc G = (\mc V, \mc E)$ and an arbitrary partition $\mc P = \{\mc P_1, \dots, \mc P_m\}$ of $\mc V$. The proposed control technique is minimally invasive in the sense that it looks for the smallest correction (in the Frobenius norm sense) of inter-cluster network weights and oscillators' natural frequencies that renders the cluster synchronization manifold $\mc S_{\mc P}$ invariant and locally stable. In practice, a modification of the network parameters will require either the exploitation of neural plasticity or localized surgical intervention for the modification of the network weights and structure, and pharmacological or electromagnetic influence for the refinement of the brain regions' natural frequencies.
In mathematical terms, the approach is encoded into solving the following constrained minimization problem:
\begin{align}\label{eq: control complete}
      \min\limits_{\Delta,\mu} \ &  \left\|[\Delta, \,  \mu]\right\|_{\text{F}}^2\\[.3em]
      \text{s.t.} \ \  &\bar{V}_{\mc P}^{\transpose}(\bar A + \Delta) V_{\mc P}=0, \label{eq: control complete a} \tag{\theequation a}\\[.3em]
      & B_{\text{span}}^{\transpose}(\omega +\mu) = 0, \label{eq: control complete b} \tag{\theequation b}\\[.3em]
      & H^\text{c} \circ \Delta=0 \label{eq: control complete c} \tag{\theequation c}\\[.3em]
      & \bar A+\Delta\ge 0, \label{eq: control complete d} \tag{\theequation d} \\[.3em]
      & \omega +\mu \ge 0, \label{eq: control complete d2} \tag{\theequation e} \\[.3em]
      & \lambda_{\text{max}}(\Xi(A+\Delta,\omega+\mu)) < 1, \label{eq: control complete e} \tag{\theequation f}
  \end{align}
  where $\Delta$ is the correction of the network matrix, $\mu$ is the
  correction of the natural frequencies vector, and the $(i,j)$-th
  entry of $\Delta$ is zero if $i$, $j$ belong to the same partition
  $\mc P_{k}$,~$k\in\until{m}$. Further, $H$ is the $0$-$1$ adjacency
  matrix of $\mc H = (\mc V, \mc E_{\mc H})$, which is the graph
  encoding the set of edges $\mc E_{\mc H}\subseteq \mc E$ that is
  allowed to be modified, and $H^{\text{c}} =
  \1\1^{\transpose}-H$. That is, the $(i,j)$-th entry of a solution
  $\Delta^*$ to problem \eqref{eq: control complete} is zero when the
  corresponding $(i,j)$-th entry of $H$ is zero. The optimization
  problem \eqref{eq: control complete} is illustrated in
  Fig.~\ref{fig: optimization problem}.
 
 \begin{figure}[t]
  \centering
    \includegraphics[width=1\columnwidth]{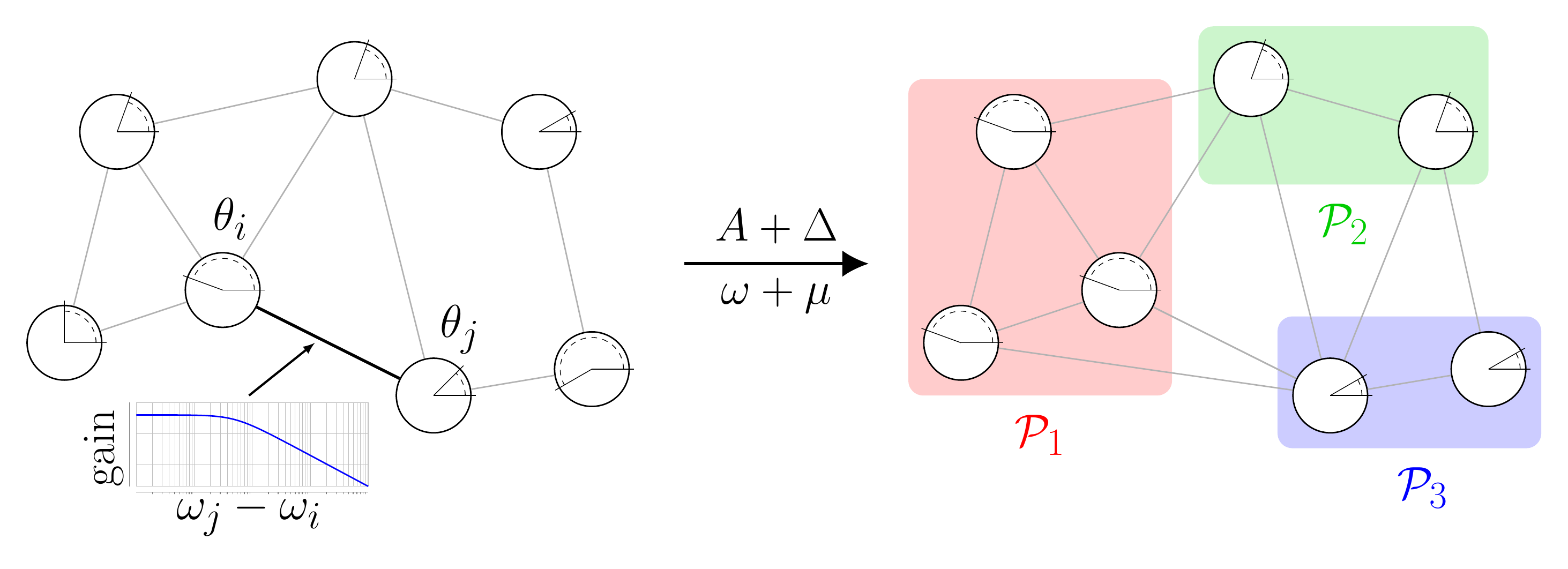}
    \caption{The left depicts a network of oscillators. The coupling
      strength between the oscillators depends on the network weights
      and the differences of their natural frequencies
      \cite{TM-GB-DSB-FP:19}. The optimization problem \eqref{eq:
        control complete} seeks for the smallest modification of the
      network weights and the oscillators' natural frequencies to
      ensure a desired stable pattern of cluster synchronization
      (right panel). We remark that the techniques used in this paper
      for cluster synchronization in frequency-weighted networks of
      Kuramoto oscillators, are applicable to a broad class of network
      optimization problems, e.g., \cite{TM-VK-DSB-FP:18}.}
  \label{fig: optimization problem}
\end{figure}
 
Constraints \eqref{eq: control complete a} and \eqref{eq: control complete b} are equivalent to conditions (C$2$) and (C$1$), respectively, for the invariance of $\mc S_{\mc P}$. Constraint \eqref{eq: control complete c} restricts the corrective action to a subset of all the possible interconnections, in affinity with the practical limitations of localized interventions. Constraints \eqref{eq: control complete d} and \eqref{eq: control complete d2} are due to biological compatibility and require the inter-cluster weights of the perturbed network and oscillators' natural frequencies to be nonnegative. Finally, Constraint \eqref{eq: control complete e} corresponds to (C3) and guarantees the (local) stability of $\mc S_{\mc P}$. In particular, the latter constraint makes the above problem non-convex and, therefore, potentially intractable from a numerical viewpoint. To overcome this issue, we next propose a suboptimal, yet numerically more tractable, control strategy. 
Specifically, we decouple \eqref{eq: control complete} into two simpler subproblems. The first one solves for the smallest correction of  inter-cluster weights satisfying \eqref{eq: control complete a}, \eqref{eq: control complete c}, and \eqref{eq: control complete d}, whereas the second one solves for the smallest correction of the oscillators' natural frequencies satisfying  \eqref{eq: control complete b},  \eqref{eq: control complete d2} and~\eqref{eq: control complete e}.
 
\subsection{Inter-cluster structural control for invariance of $\mc S_{\mc P}$}

We first address the problem of computing the smallest correction of inter-cluster weights such that constraints \eqref{eq: control complete a}, \eqref{eq: control complete c}, and \eqref{eq: control complete d} are satisfied. Specifically, we focus on the following minimization problem:
\begin{align}
      \min\limits_{\Delta} \ &  \|\Delta\|_{\text{F}}^{2} \label{eq: control invariance} \\[.3em]
      \text{s.t.} \ \  &\bar{V}_{\mc P}^{\transpose}(\bar A + \Delta) V_{\mc P}=0,\tag{\theequation a} \label{eq: control invariance a}\\[.3em]
      & H^\text{c} \circ \Delta=0,\tag{\theequation b} \label{eq: control invariance b} \\[.3em]
      & \bar A+\Delta\ge 0. \tag{\theequation c} \label{eq: control invariance c}
  \end{align}
The optimization problem \eqref{eq: control invariance} is convex and, when feasible, it can be efficiently solved by means of standard optimization techniques. Feasibility of \eqref{eq: control invariance} depends on the constraint graph $\mc H$ (see Remark \ref{remark: additional constraints}).
In what follows, we present a simple and efficient projection-based algorithm to solve this~problem.

\begin{theorem}{\bfseries\em (Smallest sparse inter-cluster correction)} \label{thm correction}
Assume that the problem \eqref{eq: control invariance} is feasible, and consider the matrix sequence $\{Z_{k}\}_{k\ge 0}$ generated via the following iterative procedure:
\begin{equation}  \label{eq alt proj}
\begin{aligned}
Y_{k}  &= H\circ (Z_{k}+T_{k})^{+} +H^{\text{c}}\circ \bar A,  \\
T_{k+1} & = Z_{k} +T_{k} -Y_{k}, \\
Z_{k+1} &  = Y_{k} + Q_{k} -\bar V_{\mc P} \bar V_{\mc P}^{\transpose} (Y_{k}+Q_{k}) V_{\mc P} V_{\mc P}^{\transpose},\\
Q_{k+1} & = Y_{k} +Q_{k} - Z_{k+1},
\end{aligned}
\end{equation}
where $Z_{0}=\bar{A}$, and $T_{0}=Q_{0}=0$.
Then, the sequence $\{Z_{k}\}_{k\ge 0}$ converges to a matrix $Z^{*}$, and a minimizer of \eqref{eq: control invariance} subject to \eqref{eq: control invariance a}, \eqref{eq: control invariance b}, and \eqref{eq: control invariance c}, has the form $\Delta^{*} = Z^{*}-\bar{A}$.
  \end{theorem}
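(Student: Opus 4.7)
The plan is to recognize the iteration \eqref{eq alt proj} as Dykstra's alternating projection algorithm applied to the intersection of two closed convex sets, and then invoke its classical convergence guarantee.

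First I would perform the change of variables $Z = \bar A + \Delta$, so that problem \eqref{eq: control invariance} becomes the projection of $\bar A$, in Frobenius norm, onto $\mathcal C_1 \cap \mathcal C_2$, where
\begin{align*}
\mathcal C_1 &= \{Z \in \real^{n\times n} : H^{\text c}\circ Z = H^{\text c}\circ \bar A,\ H\circ Z \ge 0\}, \\
\mathcal C_2 &= \{Z \in \real^{n\times n} : \bar V_{\mc P}^{\transpose} Z V_{\mc P} = 0\}.
\end{align*}
Both sets are closed and convex ($\mathcal C_1$ is an intersection of equalities and half-spaces acting entrywise, $\mathcal C_2$ is a linear subspace), and by the feasibility assumption $\mathcal C_1\cap\mathcal C_2 \neq \emptyset$. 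The minimizer $\Delta^\ast$ of \eqref{eq: control invariance} is therefore $Z^\ast - \bar A$, where $Z^\ast = \Pi_{\mathcal C_1\cap\mathcal C_2}(\bar A)$.

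Next I would derive closed-form expressions for the individual projections and show they match the updates in \eqref{eq alt proj}. For any $W$, entries of $\Pi_{\mathcal C_1}(W)$ on $H^{\text c}$ must equal $\bar A$, and on $H$ they are the entrywise projection of $W$ onto the nonnegative orthant; hence
\begin{equation*}
\Pi_{\mathcal C_1}(W) = H\circ W^{+} + H^{\text c}\circ \bar A,
\end{equation*}
which is exactly the form of the $Y_k$ step. Since $\bar V_{\mc P}$ and $V_{\mc P}$ have orthonormal columns, $\mathcal C_2 = \{Z : \bar V_{\mc P}^\transpose Z V_{\mc P}=0\}$ is a linear subspace and one checks via a direct Frobenius-inner-product computation that
\begin{equation*}
\Pi_{\mathcal C_2}(W) = W - \bar V_{\mc P} \bar V_{\mc P}^{\transpose} W V_{\mc P} V_{\mc P}^{\transpose},
\end{equation*}
which matches the $Z_{k+1}$ step once the argument $W$ is set to $Y_k + Q_k$. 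The auxiliary sequences $T_k$ and $Q_k$ in \eqref{eq alt proj} accumulate the residuals $Z_k + T_k - Y_k$ and $Y_k + Q_k - Z_{k+1}$ exactly as in the definition of Dykstra's alternating projection scheme, initialized with $Z_0 = \bar A$ and $T_0 = Q_0 = 0$.

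Finally I would invoke the Boyle--Dykstra convergence theorem: for any two closed convex sets whose intersection is nonempty, the sequence generated by Dykstra's method starting at a point $W_0$ converges in norm to $\Pi_{\mathcal C_1\cap \mathcal C_2}(W_0)$. Applied with $W_0 = \bar A$, this yields $Z_k \to Z^\ast = \Pi_{\mathcal C_1\cap \mathcal C_2}(\bar A)$, and consequently $\Delta^\ast = Z^\ast - \bar A$ solves \eqref{eq: control invariance}. The main obstacle is not the convergence invocation itself but the careful verification of the two projection formulas, in particular confirming that the subspace projection onto $\mathcal C_2$ is indeed orthogonal (this is where orthonormality of $\bar V_{\mc P}$ and $V_{\mc P}$ is essential) and that the bookkeeping of the residuals $T_k, Q_k$ exactly reproduces Dykstra's update, rather than plain alternating projections, which would in general fail to yield the true projection onto the intersection.
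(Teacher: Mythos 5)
Your proposal is correct and follows essentially the same route as the paper: both recast \eqref{eq: control invariance} as the Frobenius-norm projection of $\bar A$ onto the intersection of the two closed convex sets (the entrywise constraint set and the subspace $\{Z : \bar V_{\mc P}^{\transpose} Z V_{\mc P}=0\}$), verify the two closed-form projections (the paper does the subspace one via a Lagrange-multiplier lemma, you via an orthogonality/inner-product check), and conclude by the Boyle--Dykstra convergence theorem with starting point $\bar A$. No gaps; the identification of the residual bookkeeping $T_k, Q_k$ with Dykstra's scheme is exactly the paper's argument.
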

  \smallskip
  
  \begin{proof}
  Let $\Pi_{\mc Z}(W) =\arg\min_{Z\in\mc Z} \|Z-W\|_{\text{F}}^{2}$ denote the projection (in the Frobenius norm sense) of $W$ onto a convex set $\mc Z$, and define the closed convex sets $\mc Z_{1}=\setdef{Z\in\real^{n\times n}}{ Z\ge 0 \text{ and } H^{\text{c}}\circ Z =\bar A}$ and $\mc Z_{2}=\setdef{Z\in\real^{n\times n}}{\bar{V}_{\mc P}^{\transpose}Z V_{\mc P}=0}$.
  Note that $\Pi_{\mc Z_{1}}(W)= H\circ W^+ + H^\text{c}\circ \bar A$ and, 
by Lemma \ref{lemma optimal Delta symm} in the Appendix,%
\begin{align*}
\Pi_{\mc Z_{2}}(W) & = \arg\min_{Z\in\mc Z_{2}} \|Z-W\|_{\text{F}}^{2}\\
& = W -\bar V_{\mc P} \bar V_{\mc P}^{\transpose} W V_{\mc P} V_{\mc P}^{\transpose}, 
\end{align*}
for any $W$.
Hence, the sequence $\{Z_{k}\}_{k\ge 0}$ generated by \eqref{eq alt proj} coincides with the sequence generated by Dykstra's projection algorithm \cite{JPB-RLD:86} applied to the projections onto $\mc Z_{1}$ and $\mc Z_{2}$.  Since the problem \eqref{eq: control invariance} is feasible, $\mc Z_{1} \cap \mc Z_{2}\ne \emptyset$, and the latter sequence converges to a matrix $Z^{*}=\Pi_{\mc Z_{1}\cap \mc Z_{2}}(\bar A) = \arg\min_{Z \in\mc Z_{1}\cap \mc Z_{2}}\|Z-\bar A\|_{\text{F}}^{2}$  \cite{JPB-RLD:86}. Finally,\begin{align*}
Z^{*} & = \arg\min_{Z \in\mc Z_{1}\cap \mc Z_{2}}\|Z-\bar A\|_{\text{F}}^{2}\\
& = \bar A +\arg\min_{\substack{\Delta \text{ s.t. } \eqref{eq: control invariance a},\, \eqref{eq: control invariance b},\, \eqref{eq: control invariance c}}} \|\Delta\|_{\text{F}}^{2},
\end{align*}
and the statement follows.
  \end{proof}
  
\smallskip
  
  \begin{remark}{\bfseries\em (Sufficient condition for the feasibility of \eqref{eq: control invariance})}\label{remark: additional constraints} 
Recall from \cite{LT-CF-MI-DSB-FP:17} that condition (C$2$) is equivalent to: 
  \begin{equation}\label{eq: lorenzo}\sum_{k \in \mc P_\ell} a_{i k} - a_{j k} = 0
  \end{equation} for every $i,j \in \mc P_z$ and for all $z,\ell \in \until{m}$, with $z \neq \ell$. 
  Notice that, if for every $i \in \mc P_z$ there exists at least one $(k,i)\in\mc E_{\mc H}$, $k \in \mc P_\ell$,  a solution to \eqref{eq: lorenzo} can always be found and problem \eqref{eq: control invariance} is feasible.
 \oprocend
  \end{remark}
  
\subsection{Frequency tuning for local stability of $\mc S_{\mc P}$}\label{sec: frequency tuning}
We now turn to the problem of computing the smallest correction of natural frequencies such that constraints \eqref{eq: control complete b}, \eqref{eq: control complete d2}, and \eqref{eq: control complete e} are satisfied. That is,
\begin{align}\label{eq: control stability}
      \min\limits_{\mu} \ &  \|\mu\|_\text{F}^{2} \\[.3em]
      \text{s.t.} \ \
      & B_{\text{span}}^{\transpose}(\omega +\mu) = 0, \tag{\theequation a} \label{eq: control stability a} \\[.3em]
      & \omega +\mu \ge 0, \label{eq: control stability b} \tag{\theequation b} \\[.3em]
     &  \lambda_{\text{max}}(\Xi(A,\omega+\mu)) < 1. \tag{\theequation c} \label{eq: control stability c}
  \end{align}

\begin{theorem}{\bfseries\em (Feasibility of problem \eqref{eq: control stability})}\label{thm frequency tuning}
There always exists a correction $\mu$ satisfying \eqref{eq: control stability a}, \eqref{eq: control stability b}, and \eqref{eq: control stability c}.
\end{theorem}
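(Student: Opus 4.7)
My plan is to construct a feasible $\mu$ in two stages: first, enforce constraints \eqref{eq: control stability a} and \eqref{eq: control stability b} by choosing the perturbed frequency $\tilde\omega := \omega + \mu$ to be piecewise constant and nonnegative on the partition $\mc P$; second, exploit the high-frequency roll-off of the transfer functions $G_k$ to drive the gain matrix $\Xi$ to zero. Specifically, I would set $\tilde\omega_i = c_k$ for every $i \in \mc P_k$, where the scalars $c_1,\ldots,c_m \ge 0$ are to be chosen below, and take $\mu := \tilde\omega -\omega$. Since every edge of $\bigcup_k \mc T_k$ stays within a single cluster, $B_{\text{span}}^{\transpose}\tilde\omega=0$, which is \eqref{eq: control stability a}; nonnegativity of the $c_k$ gives \eqref{eq: control stability b}.

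The core step is \eqref{eq: control stability c}. The key observation is that the Jacobians $J_k$ and the coefficients $\nu_{k\ell}$ appearing in \eqref{eq:approx gains av} depend only on the adjacency matrix $A$, which is not modified in problem \eqref{eq: control stability}; the correction $\mu$ enters \eqref{eq:approx gains av} exclusively through the inter-cluster frequency differences $\omega^{(k\ell)} = c_k - c_\ell$. Since each $J_k$ is Hurwitz and fixed, the elementary resolvent estimate $\|(\im\omega I - J_k)^{-1}\| \le (|\omega|-\|J_k\|)^{-1}$, valid for $|\omega|>\|J_k\|$, yields $\overline{\sigma}(G_k(\im\omega)) \to 0$ as $|\omega|\to\infty$, and the analogous statement for $G_\ell$. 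Because $G_\ell(0) = -J_\ell^{-1}$ is invertible, the constant factor $\overline{\sigma}(G_k(0))/\overline{\sigma}(G_\ell(0))$ in the second branch of \eqref{eq:approx gains av} is finite, so in both branches $\xi_{k\ell} \to 0$ as $|\omega^{(k\ell)}|\to\infty$.

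To conclude, I would choose $c_k = (k-1)c$ for a scalar $c>0$, yielding $|\omega^{(k\ell)}| = c|k-\ell|\to\infty$ for every $k\neq \ell$ as $c\to\infty$. Each off-diagonal entry of $\Xi(A,\omega+\mu)$ then vanishes in the limit, so by continuity of the spectral radius $\lambda_{\text{max}}(\Xi(A,\omega+\mu))\to 0$ and \eqref{eq: control stability c} holds for all $c$ sufficiently large. The main (and essentially the only) technical point is the resolvent-decay argument, applied uniformly across the two branches of \eqref{eq:approx gains av}; once it is in hand, feasibility follows from the explicit parameter choice above.
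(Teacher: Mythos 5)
Your proposal is correct and follows essentially the same route as the paper's own proof: equalize the frequencies within each cluster (which gives \eqref{eq: control stability a}--\eqref{eq: control stability b}) and then push the inter-cluster frequency differences to infinity so that the low-pass (resolvent-decay) behavior of the entries of $\Xi$ forces $\lambda_{\text{max}}(\Xi)<1$. Your version merely makes explicit what the paper leaves informal, namely the resolvent bound justifying $\overline{\sigma}(G_k(\im\omega^{(k\ell)}))\to 0$ and the concrete parametrization $c_k=(k-1)c$.
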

\begin{proof}
Consider the vector $\mu=[\mu_{1},\dots,\mu_{n}]^\transpose$. Note that we can find some $\mu_{i}$ such that (i) $\omega_{i}+\mu_{i}=\omega_{j}+\mu_{j}>0$ for all $i,j \in \mc P_k$, $k\in\until{m}$, and  (ii) $|\omega_{i}+\mu_{i}-(\omega_{j}+\mu_{j})|>\eta$, for  all $i \in \mc P_k$, $j\in\mc P_{\ell}$, $k,\ell\in\until{m}$, $k\ne \ell$, and $\eta>0$ arbitrarily large. From (i), $\mu$ satisfies \eqref{eq: control stability a} and \eqref{eq: control stability b}. Further, since each nonzero entry of $\Xi(A,\omega+\mu)$ in \eqref{eq:approx gains av} behaves as a low-pass filter, by fact (ii) $\lambda_{\text{max}}(\Xi(A,\omega+\mu))$ can be made arbitrarily small. This implies that there always exists a vector $\mu$ satisfying \eqref{eq: control stability c} and concludes the proof.
\end{proof}

An optimal solution to \eqref{eq: control stability} is typically difficult to compute, because of the eigenvalue constraint \eqref{eq: control stability c}. However, several heuristics can be used to compute a suboptimal correction in \eqref{eq: control stability}. For instance, we next outline an effective procedure to find a suboptimal solution to \eqref{eq: control stability}. Let $\omega_\text{av}^{(k)}=\frac{1}{|\mc P_{k}|}\sum_{i\in \mc P_{k}} \omega_{i}$ denote the average frequency within each cluster, and let
\begin{align*}
\omega_\text{av}&=[\omega_{\text{av},1},\dots,\omega_{\text{av},n}]^\transpose\\[.3em]
&=[\underbrace{\omega_{\text{av}}^{(1)},\dots,\omega_{\text{av}}^{(1)}}_{|\mc P_1|},\dots,\underbrace{\omega_{\text{av}}^{(m)},\dots,\omega_{\text{av}}^{(m)}}_{|\mc P_m|}]^{\transpose}.\end{align*} Further, define the quotient graph $\mc Q = (\mc V', \mc E')$ where each node in $\mc V'$ represents a cluster and each edge in $\mc E'$ an interconnection between two clusters. Our procedure leverages Theorem \ref{thm frequency tuning} and increases the frequency differences between pairs of connected clusters until constraint \eqref{eq: control stability c} is satisfied.
The procedure consists of four steps:
\begin{enumerate}
\item  If $\lambda_{\text{max}}(\Xi(A,\omega_\text{av}))<1$, then $\mu^{*} =[\mu_{1}^{*},\dots,\mu_{n}^{*}]$, with $\mu_{i}^{*} = \omega_{\text{av},i}-\omega_{i}$ is an optimal correction to \eqref{eq: control stability}. Otherwise, proceed to the next step.
\item Construct a depth-first spanning tree $\mc T_{\mc Q}$ of $\mc Q$ rooted at $r =\arg\min_{k} \omega_{\text{av}}^{(k)}$.\footnote{Notice that such a spanning tree always exists, since $\mc Q$ is~connected.}
\item Assign the frequency $\omega(k,\alpha)=\omega^{(r)} + k\alpha$, $\alpha>0$, to each node of each cluster in $\mc T_{\mc Q}$ of depth $k$,  $k=1,2,\dots,k_{\text{max}}$, where $k_{\text{max}}$ denotes the height of~$\mc T_{\mc Q}$.\footnote{Given a connected graph $\mc G =(\mc V, \mc E)$ and a spanning tree $\mc T$ of $\mc G$ rooted at $r\in\mc V$, the depth of a node $v\in\mc V$ is the length of the path in $\mc T$ from $r$ to $v$, and the height of $\mc T$ is  the maximum depth among the nodes in $\mc V$.} Let $\omega(\alpha)=[\omega_{1}(\alpha),\dots,\omega_{n}(\alpha)]^\transpose$ denote the resulting vector of modified frequencies.
\item Find the smallest $\alpha^*$ satisfying $\lambda_{\text{max}}(A,\omega(\alpha^{*}))<1$. Then, $\mu^{*} =[\mu_{1}^{*},\dots,\mu_{n}^{*}]$, with $\mu_{i}^{*} = \omega_{i}(\alpha^{*})-\omega_{i}$, is a (suboptimal) solution to \eqref{eq: control stability}.
\end{enumerate}

\section{Robustness of the control framework}\label{sec: robustness}

In this section, we show that the control framework described in Section \ref{sec: section 3}, and in fact the stability property of the cluster synchronization manifold $\mc S_{\mc P}$, is robust to perturbations of the network parameters. That is, small changes in the oscillators' natural frequencies and network weights yield a small deviation from cluster-synchronized trajectories.
In light of this, the proposed control mechanism lends itself to practical applications, where the network parameters are not known exactly and the neural dynamics is subject to noise.

Consider the dynamics \eqref{eq: kuramoto} with perturbed parameters:
\begin{align}\label{eq: theta perturbed}
\dot{\theta}_{i} = \tilde\omega_{i} +\sum_{j\ne i} \tilde a_{ij}\sin(\theta_{j}-\theta_{i}),
\end{align} 
where $\tilde\omega_{i}=\omega_{i}+\delta \omega_{i}$ and $\tilde a_{ij}=a_{ij}+\delta a_{ij}$. 
Notice that, if $\delta \omega_{i}=0$ and $\delta a_{ij}=0$, the dynamics \eqref{eq: theta perturbed} is equivalent to~\eqref{eq: kuramoto}.
From \eqref{eq: theta perturbed}, the perturbed intra-cluster difference dynamics of nodes $i,j\in\mc P_k$, with $k\in\until{m}$, reads as:
\begin{align}\label{eq: perturbed differences dynamics}
  &\dot \theta_j - \dot \theta_i =\,  \omega_j + \delta\omega_i -\omega_i-\delta\omega_j \notag\\
  &\hspace{-0.1cm}+\!\sum_{z=1}^n\! \left[(a_{jz}+\delta a_{jz})\sin(\theta_z-\theta_j)\!-\!(a_{iz}+\delta a_{iz})\sin(\theta_z-\theta_i)\right]\!\notag\\
  &\hspace{-0.1cm}=  \omega_j - \omega_i + \sum_{z=1}^n \left[ a_{jz}\sin(\theta_z-\theta_j)- a_{iz}\sin(\theta_z-\theta_i)\right] + \delta_{ij},
\end{align}
where $\delta_{ij} = \delta\omega_j - \delta\omega_i + \sum_{z=1}^n [\delta a_{jz}\sin(\theta_z-\theta_j)$ $-\delta a_{iz}\sin(\theta_z-\theta_i)]$.
 Finally, let $\delta$ be the vector of all $\delta_{ij}$ that affect the nominal intra-cluster dynamics as in \eqref{eq: perturbed differences dynamics}.

 We are now ready to present the main result of this section, which
 resorts to the prescriptive stability condition derived in
 \cite{TM-GB-DSB-FP:19a} that we recall in the Appendix \ref{sec: AB} for
 completeness.
\begin{theorem}{\bf \emph{(Robustness of cluster synchronization)}}\label{thm: perturb cluster sync} Assume
  that the network weights satisfy Theorem \ref{thm: stability of Sp}, and consider any pair of nodes
  $i,j\in\mc P_k$, $k\in\until{m}$. Then, for some finite $T>0$ and
  for all initial conditions such that
  $|\theta_j(0)-\theta_i(0)|<\varepsilon$, with $\varepsilon>0$
  sufficiently small, the solution to the perturbed dynamics
  \eqref{eq: theta perturbed} satisfies
\begin{equation}\label{eq: bound c}
|\theta_j(t)-\theta_i(t)| \le c\,\gamma\ \quad \forall t\ge T,
\end{equation}
where $\gamma = \max_{\theta\in[0,2\pi)}\|\delta\|$, and $c$ is a
constant that depends only on the network weights.
\end{theorem}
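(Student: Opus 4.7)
The plan is to treat the perturbed intra-cluster phase-difference dynamics as the nominal exponentially stable system driven by the bounded disturbance $\delta$, and then invoke a standard Lyapunov-based perturbation argument.

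First, I would collect the intra-cluster phase differences $\theta_j-\theta_i$ for $i,j\in\mc P_k$, $k\in\until{m}$, into an error vector $x$. Equation~\eqref{eq: perturbed differences dynamics} shows that $x$ satisfies $\dot x = f(x) + \tilde\delta(t)$, where $f$ is the vector field of the nominal (unperturbed) intra-cluster difference dynamics and $\tilde\delta(t)$ stacks the perturbation terms $\delta_{ij}$. Since $\mc S_{\mc P}$ corresponds to $x=0$ and is locally exponentially stable for the unperturbed system by the hypotheses of Theorem~\ref{thm: stability of Sp}, the Jacobian $J=\partial f/\partial x|_{x=0}$ is Hurwitz. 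Inspecting the Kuramoto linearization at synchronization shows that $J$ depends only on the network weights, since the cosines of intra-cluster phase differences evaluate to one at $x=0$ and the frequency contributions cancel pairwise by condition~(C1).

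Second, let $P\succ 0$ be the unique solution to the Lyapunov equation $J^{\transpose}P + PJ = -I$, and use $V(x)=x^{\transpose}Px$ as a candidate Lyapunov function for the perturbed system. A routine computation yields $\dot V \le -\|x\|^2 + 2\|P\|\,\|x\|\,\|\tilde\delta\|$ up to a higher-order remainder in $x$ arising from the nonlinearity of $f$. Restricting the initial condition to a sublevel set of $V$ contained in a ball of radius $\varepsilon$ (sufficiently small that the remainder is dominated by the linear decay, which is possible because $J$ is Hurwitz), a standard perturbation argument for exponentially stable equilibria gives $\|x(t)\|\le c\,\gamma$ for all $t\ge T$, where $c$ is determined by $\lambda_{\text{min}}(P)$, $\lambda_{\text{max}}(P)$, and $\|P\|$, and $T$ is finite. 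Since $J$ and hence $P$ depend only on $A$, so does $c$. The pointwise inequality $|\theta_j(t)-\theta_i(t)| \le \|x(t)\|$ then yields~\eqref{eq: bound c}.

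The principal obstacle is verifying that the constant $c$ is truly independent of the natural frequencies. This is resolved by the structural observation above: after differencing, the intra-cluster frequency terms cancel by (C1), and the inter-cluster contributions to $f$ vanish on $\mc S_{\mc P}$ by (C2), so they enter only as a nonlinear remainder of order $\|x\|^2$ that can be absorbed into the quadratic remainder already handled in the perturbation argument. A secondary subtlety is the transient time $T$: because one only controls the initial condition in an $\varepsilon$-ball, $T$ must be chosen large enough that the nominal exponential decay drives $\|x\|$ into the asymptotic ball of radius $c\gamma$; this is guaranteed by the spectral abscissa of the Hurwitz Jacobian $J$, which again depends only on the network weights.
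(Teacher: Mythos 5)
Your overall skeleton---viewing the perturbed intra-cluster error dynamics as an exponentially stable nominal system driven by the bounded disturbance $\delta$ and invoking a Lyapunov ultimate-boundedness result, essentially \cite[Lemma 9.2]{HKK:02}---matches the paper's strategy, but there is a genuine gap in how you establish the nominal exponential stability, and it is precisely the step where the hypothesis of Theorem \ref{thm: stability of Sp} must enter. You claim that the inter-cluster couplings vanish on $\mc S_{\mc P}$ by (C2) and therefore ``enter only as a nonlinear remainder of order $\|x\|^2$.'' That is false: expanding the inter-cluster contribution to $\dot\theta_j-\dot\theta_i$, $i,j\in\mc P_k$, about the manifold, the zeroth-order term indeed vanishes by (C2), but the first-order term is $\cos\bigl(\theta^{(\ell)}(t)-\theta^{(k)}(t)\bigr)$ (with $\theta^{(k)},\theta^{(\ell)}$ the common cluster phases) multiplying a linear combination of intra-cluster differences of clusters $k$ and $\ell$. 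This is a first-order, time-varying coupling whose coefficient never decays, because distinct clusters drift at different frequencies on $\mc S_{\mc P}$. Consequently the intra-cluster error dynamics is not an autonomous system with a single well-defined Jacobian $J$ at $x=0$: its linearization is time-varying, with Hurwitz diagonal blocks $J_k$ (the isolated clusters) and bounded time-varying off-diagonal blocks whose size is set by the inter-cluster weights. A single Lyapunov equation $J^{\transpose}P+PJ=-I$ is therefore not available, and your argument never actually uses the $M$-matrix condition, which is exactly what controls this coupling.

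The paper's proof repairs this by treating the clusters as an interconnected system: take $P_k\succ0$ with $J_k^{\transpose}P_k+P_kJ_k=-I$ for each isolated cluster, use the $M$-matrix property of $S$ to obtain weights $d_k>0$ with $DS+S^{\transpose}D\succ0$, and form the weighted composite Lyapunov function $V(x_\text{intra})=\sum_{k=1}^m d_k\, x_\text{intra}^{(k)\transpose}P_k x_\text{intra}^{(k)}$ for the full (time-varying) intra-cluster error dynamics. The bounds $c_1\|x_\text{intra}\|^2\le V\le c_2\|x_\text{intra}\|^2$, $\dot V\le -c_3\|x_\text{intra}\|^2$ with $c_3=\lambda_\text{min}(DS+S^{\transpose}D)/2$, and $\|\partial V/\partial x_\text{intra}\|\le c_4\|x_\text{intra}\|$ then feed into \cite[Lemma 9.2]{HKK:02}, and all constants depend only on the network weights. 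To fix your proposal you must replace the single-$P$ construction with this composite Lyapunov function (or an equivalent argument showing that the time-varying inter-cluster coupling is dominated by the intra-cluster contraction), rather than absorbing the inter-cluster terms into a quadratic remainder.
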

\smallskip

\begin{proof}
In the first part of the proof, we combine the Lyapunov functions for the isolated clusters $\mc P_k$, $k = 1,\dots, m$, into a Lyapunov function for the intra-cluster differences dynamics of the whole network. In the second part of the proof, we show that such Lyapunov function satisfies certain bounds, so that the application of \cite[Lemma 9.2]{HKK:02} suffices to prove the claimed statement. 

We let $x_{ij} = \theta_j-\theta_i$, and $S$, $x_\text{intra}$ and $J_k$ be as in the Appendix \ref{sec: AB}.
To combine the Lyapunov functions of the isolated clusters, we note that if $S$ is an $M$-matrix, then, along the lines of \cite[Proof of Theorem 3.2]{TM-GB-DSB-FP:19a}, 
the origin of the nominal intra-cluster dynamics of $x_\text{intra}$ is locally exponentially stable with Lyapunov function 
\begin{equation}\label{eq: V nom sys}
V(x_\text{intra}) = \sum_{k=1}^m d_k x_\text{intra}^{(k)\transpose} P_k x_\text{intra}^{(k)},
\end{equation}
where $P_k\succ 0$ satisfies $J_k^\transpose P_k + P_kJ_k = -I$, and  $d_k>0$ are such that $DS+S^\transpose D \succ 0$, with
$
D = \mathrm{diag}(d_1 \dots, d_m)
$~\cite{HKK:02}.

Consider now the Lyapunov function \eqref{eq: V nom sys}, and notice that:
$
c_1 \| x_\text{intra} \|^2 \le \sum_{k=1}^m d_k x_\text{intra}^{(k)\transpose} P_k x_\text{intra}^{(k)} \le c_2\| x_\text{intra} \|^2,
$
with $c_1 = m d_\text{min} \min_k\lambda_\text{min}(P_k)$ and $c_2 = m d_\text{max}\max_k \lambda_\text{max}(P_k)$. Further, in the ball of radius $r$ of the origin  $\mc B_r = \setdef{x_\text{intra}}{\|x_\text{intra}\|<r, \, \dot V(x_\text{intra})<0}$, it holds that $
\dot V(x_\text{intra}) \le -c_3 \| x_\text{intra} \|^2$,
with $c_3=\lambda_\text{min}(DS+S^\transpose D)/2$. To see this, consider the derivative of the Lyapunov function $V(x_\text{intra})$ along the trajectories of the nominal system. Then, from \cite[\S9.5]{HKK:02} and \cite[proof of Theorem 3.2]{TM-GB-DSB-FP:19a}, the following inequality holds in $\mc B_r$:
\begin{align*}
\dot V &\le - \frac{1}{2} \begin{bmatrix} \|x_\text{intra}^{(1)}\|& \dots & \|x_\text{intra}^{(m)}\| \end{bmatrix}(DS+S^\transpose D)\begin{bmatrix}\|x_\text{intra}^{(1)}\|\\ \vdots \\ \|x_\text{intra}^{(m)}\|\end{bmatrix}\\
&\le - \frac{1}{2} \lambda_\text{min}(DS+S^\transpose D)\left\|x_\text{intra} \right\|^2,
\end{align*}
and $c_3$ follows. Further, since $\|\partial V / \partial x_\text{intra} \| = \| 2 x_\text{intra}^\transpose P_k \| < 2 \lambda_\text{max}(P_k)\|x_\text{intra} \|$, we have $
\|\partial V / \partial x_\text{intra}\| \le c_4  \| x_\text{intra} \|
$, with $c_4 = 2c_2$. Finally, once the constants $c_1, c_2, c_3$, and $c_4$ are computed, the definition of $x_\text{intra}$ and \cite[Lemma 9.2]{HKK:02} conclude the proof.
\end{proof}
\smallskip

Importantly, Theorem \ref{thm: perturb cluster sync} can be used to provide a quantitative bound on the asymptotic value of $|\theta_j - \theta_i|$. In fact, we can compute the constant $c$ in \eqref{eq: bound c} by exploiting \cite[Lemma 9.2]{HKK:02} and $c_1, c_2, c_3,c_4$ derived in the above proof.

\section{Control of functional connectivity in an empirically-reconstructed brain network}\label{sec: section 4}

We conclude this paper with the application of the control mechanism
presented in Section~\ref{sec: section 3} to the brain network
estimated in \cite{HP-CL-GX-MR-HCJ-WVJ-SO:08}, which is publicly
available at \url{http://umcd.humanconnectomeproject.org/umcd}. In
these data, structural connectivity is proportional to large-scale
connection pathways between cortical regions, and the gray matter is
subdivided into $n = 66$ cortical regions ($33$ per hemisphere). To
show the effectiveness of our proposed method in enforcing desired
functional connectivity by means of arbitrary synchronization
patterns, we partition the structural brain network in $3$ clusters,
i.e. $\mc P= \{\mc P_1, \mc P_2, \mc P_3\}$, each one comprising $22$
regions that do not belong to any known functionally connected
resting-state network. The three clusters are highlighted with
different colors in Fig.~\ref{fig: SC BrainNet} and Fig.~\ref{fig: A}.
Furthermore, following our goal of providing a method to enhance the
synchronization properties of a diseased or damaged brain, we simulate
the effects of brain damage, e.g., a stroke, by damping the
connectivity of one cluster \cite{AMT-LS-ES-MR-GF-DGN-FL:13}.  That
is, we weaken the intra-cluster connections of the first cluster by a
scaling factor $10^{-2}$ to echo reduced structural connectivity, and
we show that our technique can in fact recover stability of the
cluster synchronization manifold associated with the desired network
partition.\footnote{Specifically, weakening the intra-cluster
  connections of one cluster is likely to make $\mc S_{\mc P}$
  unstable \cite{TM-GB-DSB-FP:19}.}

Before presenting our results, we describe the methodology used to simulate human rs-fMRI functional connectivity.

\begin{figure}[t]
  \centering
  \subfigure[]
  {
    \includegraphics[width=0.44\columnwidth]{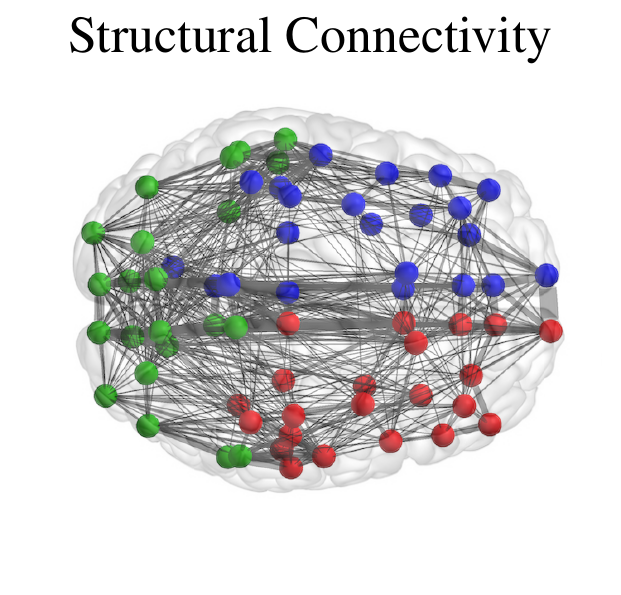}
    \label{fig: SC BrainNet}
    }
          \hspace{0.165cm}\subfigure[]
      {
    \includegraphics[width=0.44\columnwidth]{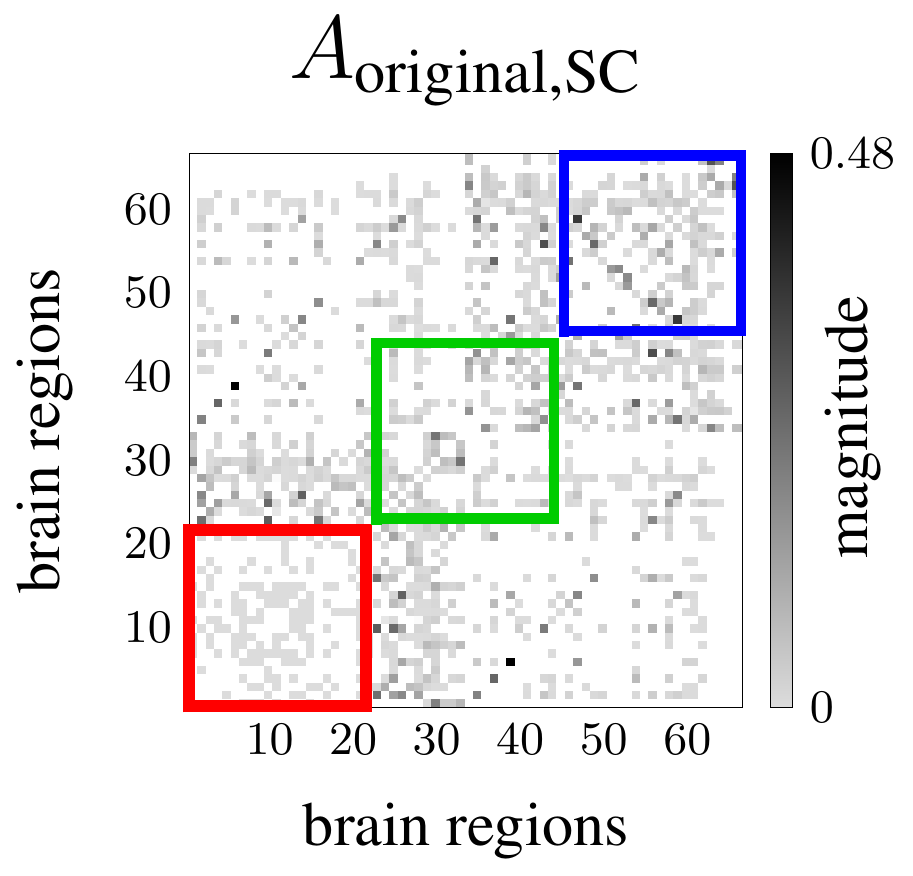}
    \label{fig: A}
    }\vspace{-0.1cm}
    \subfigure[]
       {
    \includegraphics[width=0.46\columnwidth]{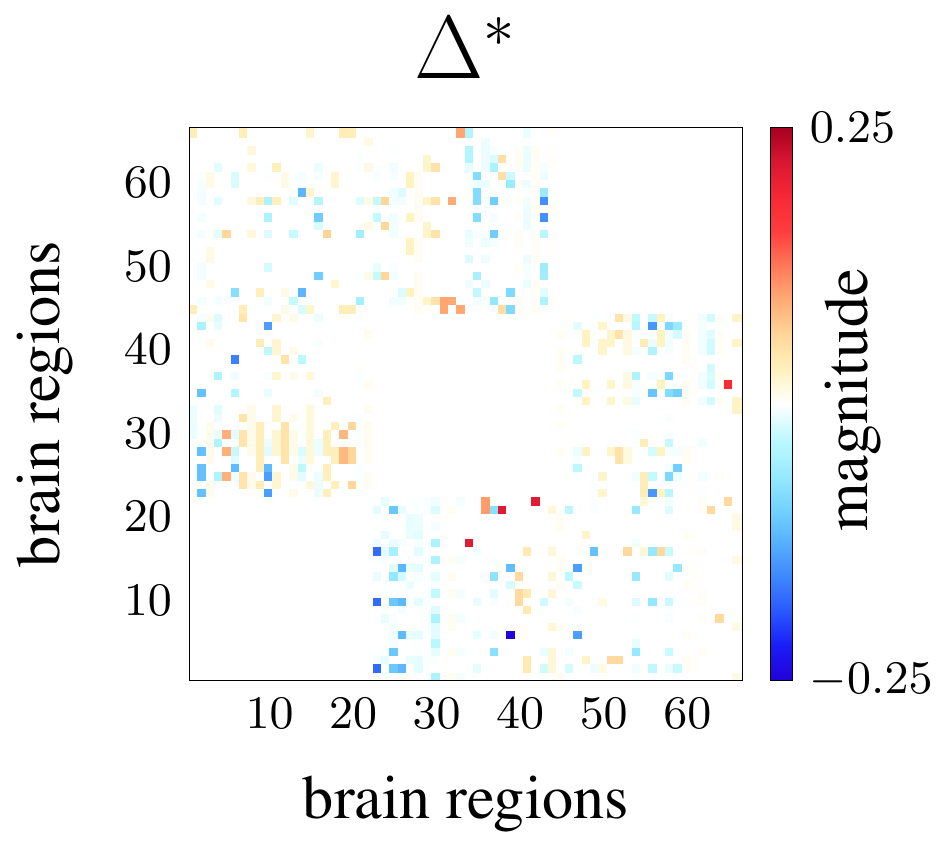}
    \label{fig: Delta}
    }
      \subfigure[]
      {
    \includegraphics[width=0.44\columnwidth]{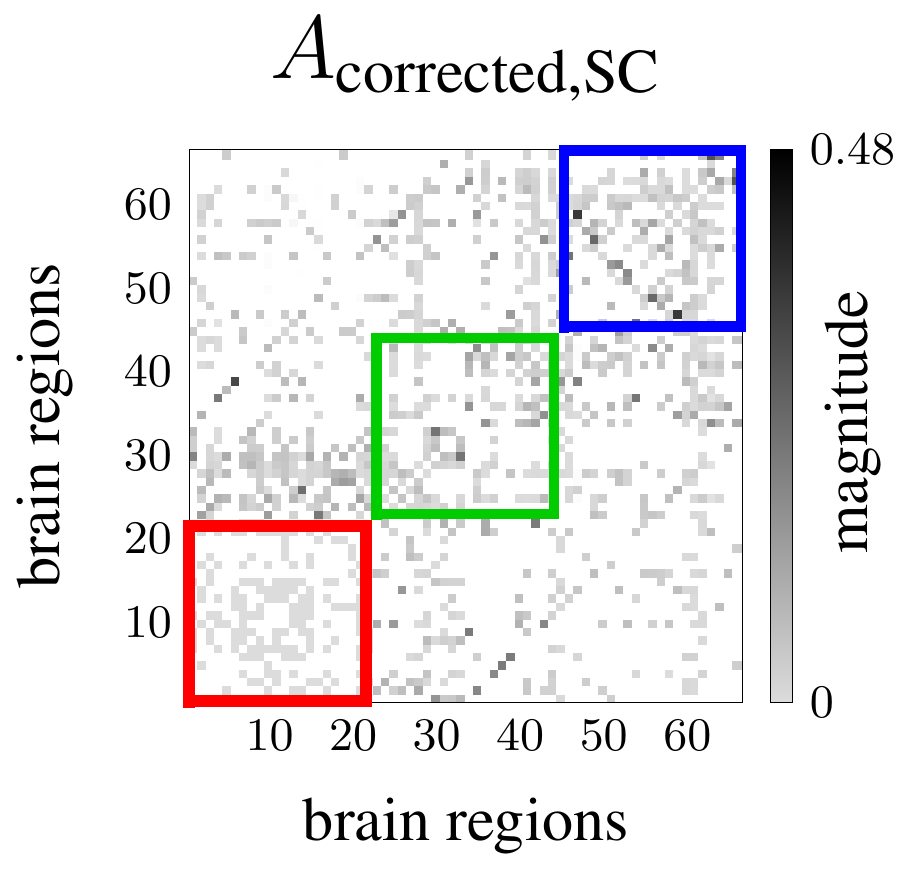}
    \label{fig: A+Delta}
    }
  \caption{Fig.~\ref{fig: SC BrainNet} depicts an axial view of the structural connectivity estimated in \cite{HP-CL-GX-MR-HCJ-WVJ-SO:08}, and was obtained with \emph{BrainNet
      Viewer} \cite{MX-JW-YH:13}. The edge thickness is proportional to the number of white matter streamlines connecting different regions. Fig.~\ref{fig: A} represents the adjacency matrix of the structural brain network in Fig.~\ref{fig: SC BrainNet}, where the white entries correspond to zero, and the intra-cluster connections in the first cluster (red nodes in Fig.~\ref{fig: SC BrainNet}) have been weakened to simulate the effect of brain damage. Fig.~\ref{fig: Delta} represents the network matrix correction $\Delta^*$ solution to the iterative procedure \eqref{eq alt proj} in Theorem \ref{thm correction}. Finally,  Fig.~\ref{fig: A+Delta} represents the matrix $A_{\text{corrected,SC}} = A_{\text{original,SC}} + \Delta^*$, where the total change of the edge weights amounts to $17\%$ (in the Frobenius norm) of the original~ones.}\vspace{-0.1cm}
  \label{fig: matrices corrections}
\end{figure}

\begin{figure*}[t]
  \centering
    \includegraphics[width=2\columnwidth]{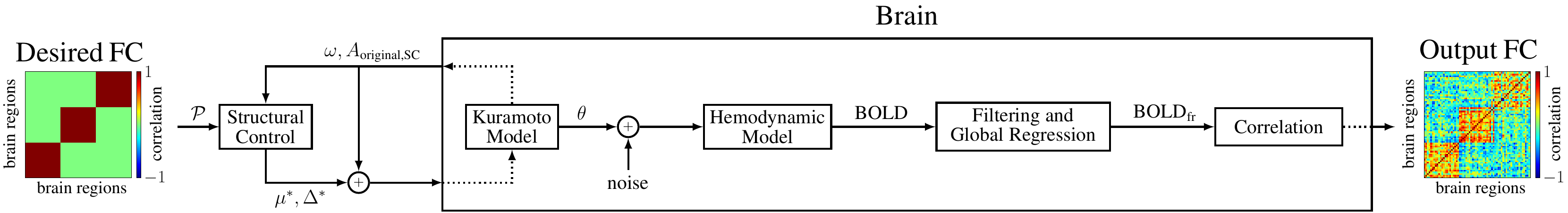}\vspace{-0.05cm}
  \caption{Schematic illustrating the pipeline to obtain desired functional connectivity (FC) from a structural connectivity matrix $A_\text{original,SC}$, the original natural frequencies $\omega$, and a desired network partition $\mc P$. The addition of noise to the synchronized neural dynamics $\theta$ represents the presence of background noise. The output matrix depicts the brain regions' functional connectivity simulated by computing the correlations of filtered and regressed BOLD signals.}\vspace{-0.1cm}
  \label{fig: pipeline}
\end{figure*}

\subsection{Simulation of functional connectivity}

The brain's neural activity is simulated through a network of coupled
Kuramoto oscillators, where we randomly draw the natural frequencies
of each oscillator from a uniform distribution in the range $[0, 60]$
$[$Hz$]$ so as to include all meaningful neural frequency bands
\cite{DM-MGP-CDG-GLR-MC:07}. We set the initial phases in the interval
$[0, 0.5]$ $[$rad$]$.
The Kuramoto phases act as an input to the neurovascular coupling,
which is modeled by the Balloon-Windkessel hemodynamic process
\cite{KJF-AM-RT-CKP:00}, and whose output is the blood-oxygen-level
dependent (BOLD) signal that is measured by rs-fMRI.

The neuronal activity $z_i$ of the $i$-th brain region produces an increase in a vasodilatory signal $s_i$, which is subject to auto-regulatory feedback. The inflow $f_i$ responds in proportion to this signal with concomitant changes in blood volume $\mu_i$ and deoxyhemoglobin content~$q_i$. Mathematically, the dynamics of these quantities reads as:
\begin{align*}
&\dot s_i = z_i - \kappa_i s_i - \gamma_i(f_i-1), \quad \dot f_i = s_i, \\
&\tau \dot \mu_i = f_i - \mu_i^{1/\alpha}, \quad \tau_i\dot q_i = f_i E(f_i,\rho_i)/\rho_i - \mu_i^{1/\alpha}q_i/\mu_i.
\end{align*}
The oxygen extraction is a function of the flow $E(f,\rho) = 1-(1-\rho^{\/f})$ where $\rho$ denotes the resting oxygen extraction fraction. The biophysical parameters $\kappa, \gamma, \tau, \alpha,$ and $\rho$ are exhaustively treated in \cite{KJF-AM-RT-CKP:00}. Finally, the BOLD signal is described as a static nonlinear function:
\begin{equation*}
y_i = V_0(k_1(1-q_i) + k_2(1-q_i/v_i) + k_3(1-\mu_i)),
\end{equation*}
where $V_0 = 0.02$ denotes the resting blood volume fraction, and
$k_1 = 7\rho_i$, $k_2 = 2$, $k_3 = 2\rho_i - 0.2$. Following
\cite{JC-EH-OS-GD:11}, we choose $z_i = \sin(\theta_i)$. Further, to
account for the presence of background noise in the brain, we add
white noise to the neural activity $z_i$ with variance $10^{-2}$. We
simulate $2$ minutes of BOLD signals and process the timeseries as
explained below in order to compute functional connectivity estimates
that closely resemble that of human rs-fMRI recordings.

To reduce the effect of spurious correlations from small and non-physiological high-frequency components, we filter the synthetic BOLD signals through a low-pass filter. Consequently, to improve the correspondence between resting-state correlations and anatomical connectivity, we process all of the simulated regional BOLD signals by a global signal regression \cite{MDF-DZ-AZS-MER:09} that averages the timeseries of all regions by removing spontaneous oscillations common to the whole brain. Next, we discard the first $40$ seconds of all timeseries to eliminate the effect of initial transients. Finally, we compute the Pearson correlation of the filtered and regressed signals to obtain the synthetic functional connectivity. 
A pipeline describing the above process is illustrated in Fig.~\ref{fig: pipeline}.

\subsection{Application of the clustering control mechanism}

In the remainder of this section, we apply the control method proposed in Section \ref{sec: section 3}. 
We first solve the minimization problem \eqref{eq: control invariance} to find the optimal correction matrix $\Delta^*$ to be applied to $A_\text{original,SC}$ such that condition (C$2$) for the invariance of $\mc S_{\mc P}$ is satisfied. We choose to constrain the corrective action on a set of edges $\mc E_{\mc H} = \mc E\cup \tilde {\mc E}$ that includes the original set $\mc E$ and a minimal set $\tilde {\mc E}$ of randomly selected edges such that problem \eqref{eq: control invariance} is feasible (see Remark \ref{remark: additional constraints}). Fig.~\ref{fig: Delta} and \ref{fig: A+Delta} illustrate the corrective action $\Delta^*$ and the network matrix $A_\text{corrected,SC} = A_\text{original,SC}+\Delta^*$, respectively.

We proceed with the frequency tuning technique for invariance and
stability of $\mc S_{\mc P}$ to $A_{\text{corrected,SC}}$ so that
conditions (C$1$) and (C$3$) are satisfied. The first step involves
computing the mean natural frequency $\omega_\text{av}^{(k)}$ among
all oscillators belonging to the same cluster $\mc P_k$:
$\omega_\text{av}^{(1)} = 199.2$, $\omega_\text{av}^{(2)} =182.9$ and
$\omega_\text{av}^{(3)} = 115.4$ $[$rad/s$]$. Next, we apply the
procedure proposed in Section \ref{sec: frequency tuning}. We plot in
Fig.~\ref{fig: quotient_graph} the spanning tree of the quotient graph
$\mc T_{\mc Q}$, and in Fig.~\ref{fig: alpha_star} the optimal
$\alpha^*$ computed in step (iv) of our procedure. The final natural
frequencies are $\omega^{(1)} = 131.8$, $\omega^{(2)} = 126.4$ and
$\omega^{(3)} = 115.4$ $[$rad/s$]$. Notice that, although our
frequency tuning procedure is sub-optimal, the outcome values remain
well within the range of brain activity frequency bands and, based on
numerical results, outperform the results of Matlab's
\texttt{fmincon} function.

\begin{figure}[t]
  \centering\vspace{-0.2cm}
   \subfigure[]
  {
    \includegraphics[width=0.4\columnwidth]{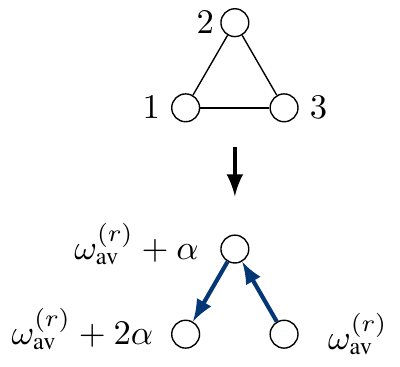}
   \label{fig: quotient_graph}
    }\;\;\;
    \subfigure[]
  {
    \includegraphics[width=0.43\columnwidth]{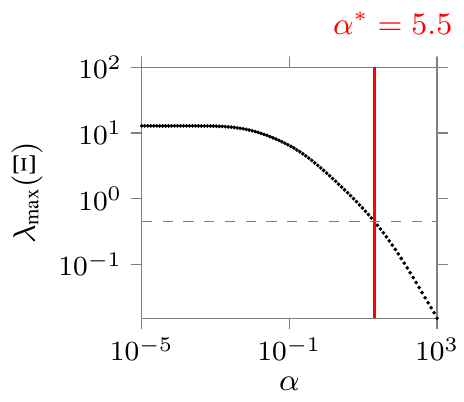}
   \label{fig: alpha_star}
    }
  \caption{Fig.~\ref{fig: quotient_graph} depicts the quotient graph associated with the three clusters in partition $\mc P$ and the natural frequencies that follow from the procedure in Section \ref{sec: frequency tuning}. Fig.~\ref{fig: alpha_star} shows the profile of $\lambda_\text{max}(\Xi)$ as a function of the tuning parameter $\alpha$ on a logarithmic scale. The thick red line highlights the smallest value $\alpha^*$ for which the local stability of the cluster synchronization manifold $\mc S_{\mc P}$ is guaranteed according to condition (C$3$).}\vspace{-0.2cm}
  
\end{figure}

\begin{figure*}[t]
  \centering
  \subfigure[]
  {
    \includegraphics[width=0.475\columnwidth]{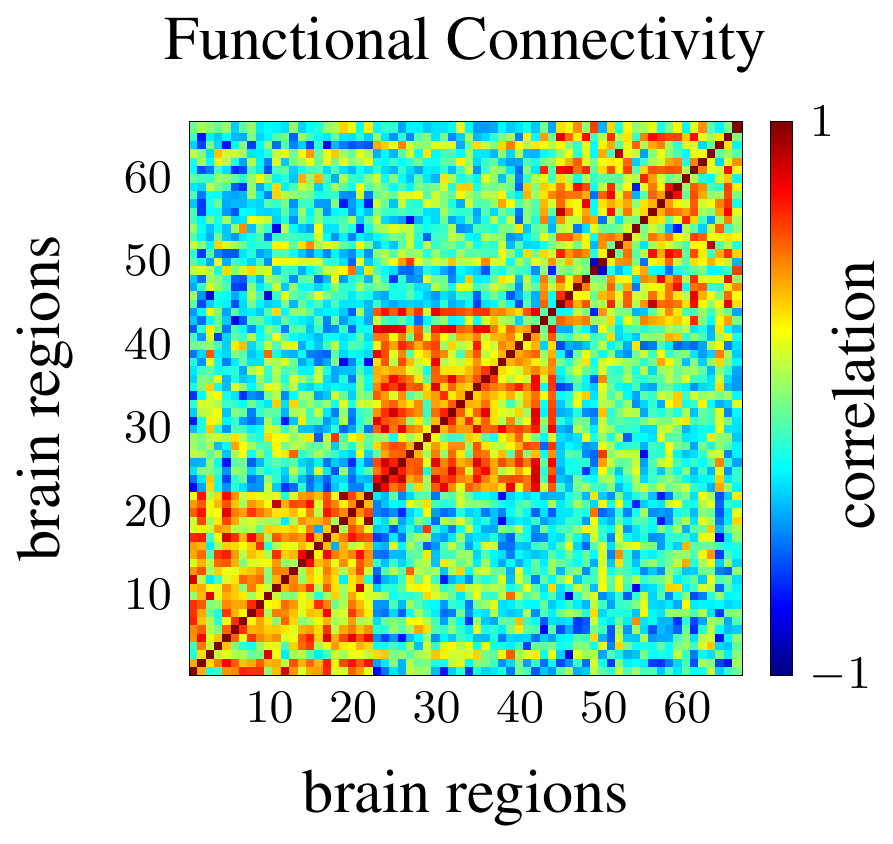}
    \label{fig: CORR}
    }\,
      \subfigure[]
      {
    \includegraphics[width=0.45\columnwidth]{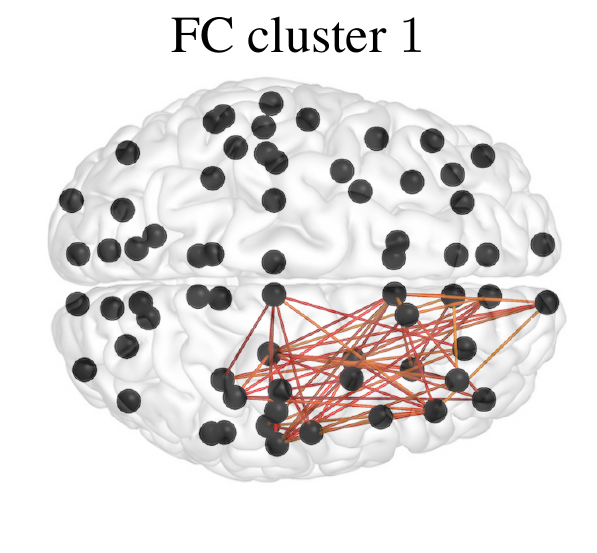}
    \label{fig: C1 BrainNet}
    }\,
      \subfigure[]
      {
    \includegraphics[width=0.45\columnwidth]{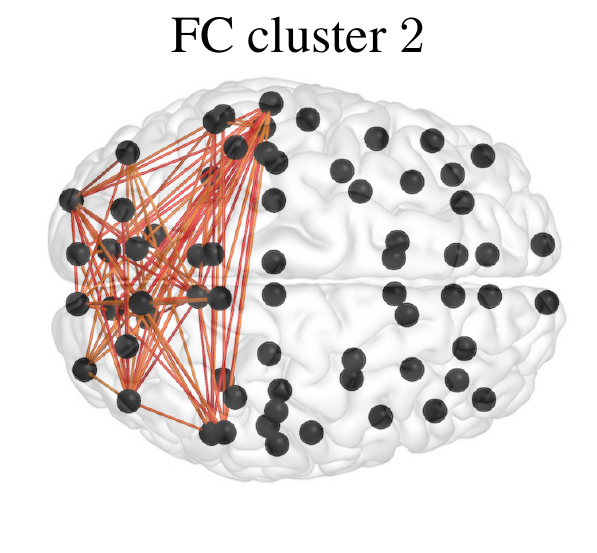}
    \label{fig: C2 BrainNet}
    }\,
      \subfigure[]
      {
    \includegraphics[width=0.45\columnwidth]{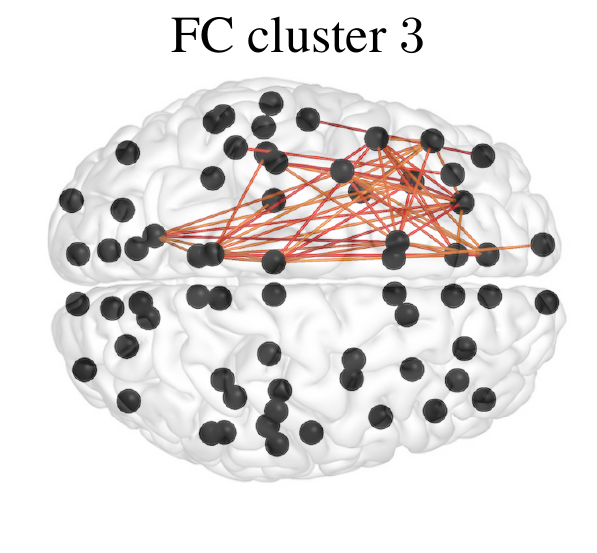}
    \label{fig: C3 BrainNet}
    }
      \vspace{-0.15cm}
  \caption{Fig.~\ref{fig: CORR} represents the correlation matrix that encodes the output functional connectivity (FC) obtained with our control mechanism. Notably, the three clusters are mostly functionally disconnected. That is, there are very few functional connections between nodes belonging to different clusters. This implies that the outcome of our procedure is robust to noisy neural activity and faithfully reproduces synchronized BOLD signals. Fig.~\ref{fig: C1 BrainNet},\ref{fig: C2 BrainNet} and \ref{fig: C3 BrainNet} illustrate the isolated functional connectivity of the desired clusters $\mc P_1$, $\mc P_2$ and $\mc P_3$, respectively, after the correlation matrix has been thresholded to $0.5$ to show only the meaningful functional interactions. The functional edges are color-coded according to the colorbar of Fig.~\ref{fig: CORR}.\vspace{-0.25cm}}
  \label{fig: FC BrainNet}
\end{figure*}

Finally, by following the pipeline described in the previous subsection, we compute the desired functional connectivity pattern, which we show in Fig.~\ref{fig: FC BrainNet}. Notably, the functional connectivity of the desired clusters is strong and the correlations between different clusters are negligible. Thus, the proposed method to control synchronization patterns of oscillatory neural activity lends itself to a physiologically plausible framework and shows rather promising results.

\section{Conclusion}\label{sec: conclusion}
In this work, we propose a minimally invasive technique to obtain
robust synchronization patterns in sparse networks of heterogeneous
Kuramoto oscillators. To the best of our knowledge, this is the first
attempt at blending mathematically rigorous methods with physiological
models of brain activity with the goal of steering whole-brain
synchronization dynamics. Specifically, we cast a constrained
optimization problem whose solution not only satisfies mathematical
conditions for invariance and stability of an arbitrary cluster
synchronization manifold, but also meets biological constraints. We
decompose the complete optimization problem into two simpler
subproblems, and provide efficient methods to solve them. When
applying our technique to correct the network parameters of
empirically-reconstructed anatomical brain data, we find that our
solution, although suboptimal, provides a result that is well within
the range of physiologically plausible parameters. Additionally, we
show that cluster synchronization is robust to small parameter
mismatches and numerical inaccuracies. This result complements
previous prescriptive studies on cluster synchronization and enables
the use of our framework in practical situations.

\appendix

\setcounter{appxlem}{0}
\setcounter{appxthm}{0}
\renewcommand{\theappxlem}{\Alph{section}.\arabic{appxlem}}
\renewcommand{\theappxthm}{\Alph{section}.\arabic{appxthm}}
\renewcommand{\thesubsection}{\Alph{subsection}}

\subsection{Instrumental result for the proof of Theorem \ref{thm correction}}

\begin{appxlem}\label{lemma optimal Delta symm}  
  Consider a network $\mc G = (\mc V, \mc E)$, and an arbitrary (nontrivial) partition $\mc P = \{\mc P_1, \dots, \mc P_m\}$ of $\mc V$. 
  Let $W\in\real^{n\times n}$
  and consider the minimization~problem
  \begin{align}
      \min\limits_{Z} \ &  \|Z-W\|_{\text{F}}^{2} \label{eq: lemma min} \\[.3em]
      \text{s.t.} \ \  &\bar{V}_{\mc P}^{\transpose}Z V_{\mc P}=0,\tag{\theequation a} \label{eq: lemma min a}
  \end{align}
  The minimizer of the problem \eqref{eq: lemma min} subject to \eqref{eq: lemma min a} is
 \begin{align}\label{eq M opt}
  Z^{*}= W - \bar V_{\mc P} \bar V_{\mc P}^{\transpose} W V_{\mc P} V_{\mc P}^{\transpose}.
  \end{align}
  \end{appxlem}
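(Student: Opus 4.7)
The plan is to recognize \eqref{eq: lemma min} as an orthogonal projection onto a linear subspace of $\real^{n\times n}$ with respect to the Frobenius inner product, and to exploit the fact that $V_{\mc P}$ and $\bar V_{\mc P}$ jointly form an orthonormal basis of $\real^n$.

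First I would observe that $V_{\mc P}$ has orthonormal columns (since the vectors $v_i$ have disjoint supports and are then normalized), and that $\bar V_{\mc P}$ was defined as an orthonormal basis of the orthogonal complement of $\Image(V_{\mc P})$. Consequently, $V_{\mc P}^{\transpose} V_{\mc P} = I_m$, $\bar V_{\mc P}^{\transpose}\bar V_{\mc P}=I_{n-m}$, $\bar V_{\mc P}^{\transpose} V_{\mc P}=0$, and
\[
V_{\mc P} V_{\mc P}^{\transpose} + \bar V_{\mc P} \bar V_{\mc P}^{\transpose} = I_n .
\]
Sandwiching any $Z\in\real^{n\times n}$ between this identity on both sides yields the orthogonal four-block decomposition
\[
Z = P Z P + P Z \bar P + \bar P Z P + \bar P Z \bar P,
\]
where $P = V_{\mc P} V_{\mc P}^{\transpose}$ and $\bar P = \bar V_{\mc P} \bar V_{\mc P}^{\transpose}$ are complementary orthogonal projectors.

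Next I would verify that the four summands are mutually orthogonal in the Frobenius inner product, which follows from $\operatorname{tr}(P \bar P)=0$ together with cyclic properties of the trace. Applying the same decomposition to $W$ and using Pythagoras,
\[
\|Z-W\|_{\text{F}}^{2} = \sum_{\alpha,\beta\in\{P,\bar P\}}\|\alpha(Z-W)\beta\|_{\text{F}}^{2}.
\]
The constraint \eqref{eq: lemma min a} is equivalent to $\bar P Z P = 0$ (left-multiply by $\bar V_{\mc P}$ and right-multiply by $V_{\mc P}^{\transpose}$, and use $V_{\mc P}^{\transpose} V_{\mc P}=I$, $\bar V_{\mc P}^{\transpose}\bar V_{\mc P}=I$ to invert the map). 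Hence among the four orthogonal components of $Z$, three are free to match the corresponding components of $W$, and the fourth is forced to zero. The minimum is therefore achieved at
\[
Z^{*} = W - \bar P W P = W - \bar V_{\mc P}\bar V_{\mc P}^{\transpose} W V_{\mc P} V_{\mc P}^{\transpose},
\]
with minimum value $\|\bar V_{\mc P}\bar V_{\mc P}^{\transpose} W V_{\mc P} V_{\mc P}^{\transpose}\|_{\text{F}}^{2}$.

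Finally I would check admissibility by direct substitution: using $\bar V_{\mc P}^{\transpose}\bar V_{\mc P}=I$ and $V_{\mc P}^{\transpose} V_{\mc P}=I$,
\[
\bar V_{\mc P}^{\transpose} Z^{*} V_{\mc P} = \bar V_{\mc P}^{\transpose} W V_{\mc P} - \bar V_{\mc P}^{\transpose} W V_{\mc P} = 0,
\]
so $Z^{*}$ is feasible, concluding the proof. The only potentially delicate step is to confirm that the constraint $\bar V_{\mc P}^{\transpose} Z V_{\mc P}=0$ and the projected condition $\bar P Z P=0$ are equivalent; this is where the orthonormality of $V_{\mc P}$ and $\bar V_{\mc P}$ is essential. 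Everything else is routine projection geometry.
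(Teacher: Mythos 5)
Your proof is correct, but it takes a genuinely different route from the paper. The paper proves the lemma by the method of Lagrange multipliers: it forms the Lagrangian with a multiplier matrix $\Lambda\in\real^{(n-m)\times m}$, sets the partial derivatives to zero, recovers $\Lambda = 2\bar V_{\mc P}^{\transpose} W V_{\mc P}$ by pre- and post-multiplying the stationarity condition by $\bar V_{\mc P}^{\transpose}$ and $V_{\mc P}$, and substitutes back to obtain \eqref{eq M opt}. You instead argue geometrically: writing $P = V_{\mc P}V_{\mc P}^{\transpose}$ and $\bar P = \bar V_{\mc P}\bar V_{\mc P}^{\transpose}$, you split $\real^{n\times n}$ into the four mutually Frobenius-orthogonal blocks $\alpha Z \beta$, $\alpha,\beta\in\{P,\bar P\}$, observe that the constraint \eqref{eq: lemma min a} is equivalent to annihilating exactly the block $\bar P Z P$, and conclude by Pythagoras that the minimizer matches $W$ on the three free blocks. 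Your approach buys several things the paper's computation leaves implicit: global optimality and uniqueness come for free (the Lagrangian argument only establishes stationarity, and silently relies on convexity to upgrade it), you obtain the optimal value $\|\bar P W P\|_{\text{F}}^{2}$ as a byproduct, and you exhibit $W\mapsto Z^{*}$ explicitly as the orthogonal projection onto $\mc Z_{2}$, which is exactly the structural property invoked in the proof of Theorem \ref{thm correction} to run Dykstra's algorithm. The Lagrangian route, for its part, is more mechanical and would extend unchanged if $V_{\mc P},\bar V_{\mc P}$ were not orthonormal. One small imprecision in your write-up: mutual orthogonality of the four blocks should be justified by $P\bar P = \bar P P = 0$ (complementary symmetric idempotents) together with the cyclic property of the trace, not merely by $\mathrm{tr}(P\bar P)=0$; this is immediate from $\bar V_{\mc P}^{\transpose}V_{\mc P}=0$, so nothing essential is missing.
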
\medskip
  \begin{proof}
  We prove the result via the method of Lagrange multipliers. The Lagrangian of \eqref{eq: lemma min} subject to \eqref{eq: lemma min a} is 
$
  \mc L(Z,\Lambda)  = \  \|Z-W\|_{\text{F}}^{2} + \1^{\transpose}(\Lambda \circ \bar{V}_{\mc P}^{\transpose}Z V_{\mc P})\1
  =  \ \mathrm{tr}((Z-W)^{\transpose}(Z-W)) + \mathrm{tr}(\Lambda^{\transpose}\bar{V}_{\mc P}^{\transpose}Z V_{\mc P}),
$
where $\Lambda\in\real^{(n-m)\times m}$ is a matrix of Lagrange multipliers associated with Constraint \eqref{eq: lemma min a}, and in the last equation we used that $\1^{\transpose}(A\circ B)\1 = \mathrm{tr}(A^{\transpose}B)$. Equating the partial derivatives of $\mc L$ to zero yelds:
\begin{align}
 \frac{ \partial \mc L}{\partial Z} &= 2 (Z-W) + \bar V_{\mc P} \Lambda V_{\mc P}^{\transpose}=0, \label{eq opt 1}\\
 \frac{ \partial \mc L}{\partial \Lambda} &= \bar{V}_{\mc P}^{\transpose}Z V_{\mc P}=0, \label{eq opt 2}
\end{align}
We next pre- and post-multiply both sides of \eqref{eq opt 1} by $\bar V_{\mc P}^{\transpose}$ and $V_{\mc P}$, respectively, and obtain
\begin{align}
&  2\bar V_{\mc P}^{\transpose} Z V_{\mc P} = 2 \bar V_{\mc P}^{\transpose} W V_{\mc P} -\bar V_{\mc P}^{\transpose}\bar V_{\mc P} \Lambda V_{\mc P}^{\transpose}V_{\mc P} \notag \\
&\hspace{-0.2cm} \Rightarrow \  2 \bar V_{\mc P}^{\transpose} Z V_{\mc P} = 2\bar V_{\mc P}^{\transpose} W V_{\mc P} -\Lambda \   \Rightarrow \  \Lambda = 2\bar V_{\mc P}^{\transpose} W V_{\mc P}, \label{eq Lambda opt}
\end{align}
where in the second implication we used $V_{\mc P}^{\transpose} V_{\mc P}=I_{n}$, $\bar V_{\mc P}^{\transpose}\bar V_{\mc P}=I_{n-m}$, and $\bar V_{\mc P}^{\transpose}V_{\mc P} =0$, and in the last one we used \eqref{eq opt 2}. Finally, \eqref{eq M opt} follows by substituting \eqref{eq Lambda opt} into~\eqref{eq opt 1}.
  \end{proof}

\subsection{$M$-matrix condition for local stability of $\mc S_{\mc
    P}$}\label{sec: AB}
We now recall a stability condition established in
\cite{TM-GB-DSB-FP:19a}.  We let $x_{ij} = \theta_j-\theta_i$ denote
the phase difference between oscillators $i$ and $j$, and
$x_\text{intra} =
[x_\text{intra}^{(1)},\dots,x_\text{intra}^{(m)}]^\transpose$ denote a
smallest set of intra-cluster differences akin to (see
\cite{TM-GB-DSB-FP:19a}), where $x_\text{intra}^{(k)}$ contains only
phase differences of oscillators in $\mc P_k$.\footnote{The definition
  in \cite{TM-GB-DSB-FP:19a} is given for undirected graphs. However,
  it is straightforward to extend all definitions to digraphs,
  provided that the quotient graph $\mc Q$ defined in Section
  \ref{sec: frequency tuning} is strongly connected. This is not a
  restrictive assumption in the context of structural brain networks.}
Further, let $J_k$ be the Hurwitz stable Jacobian matrix of the
intra-cluster phase difference dynamics of $x_\text{intra}^{(k)}$.
\begin{appxthm}{\bf \emph{(Sufficient condition on network weights for
      the stability of $\mc S_{\mc P}$ \cite{TM-GB-DSB-FP:19a})}}\label{thm: stability of Sp}
  Let $\kappa = 2\max_{r}|\mc P_r|-2 $, and
  \begin{align*}
    \gamma^{(k\ell)} = \begin{cases} 
      \displaystyle \kappa \sum_{j\in \mc P_{\ell}}
      a_{ij}, &  \text{ if } \ell\ne k, \\ 
      \displaystyle \kappa  \sum_{\substack{
          \ell\ne k}}\sum_{j\in \mc P_{\ell}} a_{ij}, & \text{ otherwise,} 
    \end{cases}
  \end{align*}
  with $k, \ell \in \until{m}$, $i \in \mc P_k$. Define the $m\times m$ matrix~$S$:
    \vspace{-0.1cm}
  \begin{align*}
    S = [s_{k\ell}] = \begin{cases}
      \lambda_{\text{max}}^{-1}(X_{k})-\gamma^{(kk)}& \text{ if } k=\ell,\\
      -\gamma^{(k\ell)} & \text{ if } k\neq \ell,
    \end{cases}
  \end{align*}
  where $X_{k}\succ0$ satisfies
  $J_{k}^{\transpose}X_{k}+X_{k}J_{k}=-I$. If $S$ is an $M$-matrix, then the cluster
  synchronization manifold is locally exponentially stable.
\end{appxthm}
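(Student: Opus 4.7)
\begin{pfof}{Theorem \ref{thm: stability of Sp} (proposal)}
The plan is to certify local exponential stability of $\mc S_{\mc P}$ by constructing a block-weighted quadratic Lyapunov function on the intra-cluster phase differences and showing that, in a neighborhood of the origin, its derivative is bounded above by a quadratic form whose matrix is essentially $-(DS+S^\transpose D)/2$, after evaluating each cluster contribution in the seminorm $\|x_\text{intra}^{(k)}\|$. The $M$-matrix hypothesis on $S$ is then exactly what guarantees the existence of a positive diagonal $D$ making $DS+S^\transpose D$ positive definite, which closes the argument.

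First I would write out the dynamics of $x_\text{intra}^{(k)}$ for each cluster $k$. Using $\omega_i=\omega_j$ for $i,j\in\mc P_k$ (which follows from (C1), an implicit prerequisite for $\mc S_{\mc P}$ to be invariant), the time derivative of $\theta_j-\theta_i$ splits naturally into two parts: an \emph{intra-cluster} contribution involving only $\{\sin(\theta_z-\theta_i)-\sin(\theta_z-\theta_j):z\in\mc P_k\}$, whose Jacobian at $x_\text{intra}^{(k)}=0$ is the Hurwitz matrix $J_k$ by hypothesis; and an \emph{inter-cluster} contribution involving $\{a_{jz}\sin(\theta_z-\theta_j)-a_{iz}\sin(\theta_z-\theta_i):z\notin\mc P_k\}$, which couples cluster $k$ to the other clusters $\ell\neq k$. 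I would then set $V(x_\text{intra})=\sum_{k=1}^m d_k\,x_\text{intra}^{(k)\transpose}X_k x_\text{intra}^{(k)}$, where the $X_k\succ0$ are the Lyapunov certificates $J_k^\transpose X_k+X_k J_k=-I$ and the $d_k>0$ are to be chosen at the end.

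The core computation is $\dot V$ along trajectories. For each cluster $k$, differentiating and absorbing the pure intra-cluster Jacobian term via the Lyapunov equation yields a dominant negative term of the form $-d_k\lambda_\text{max}^{-1}(X_k)\|x_\text{intra}^{(k)}\|^2$ (after bounding $\|x_\text{intra}^{(k)}\|^2\le \lambda_\text{max}(X_k)\,x_\text{intra}^{(k)\transpose}X_k x_\text{intra}^{(k)}/\lambda_\text{max}(X_k)$ the other way), plus a higher-order remainder that is negligible near the origin. The inter-cluster cross term gives $2d_k\,x_\text{intra}^{(k)\transpose}X_k \rho_k(\theta)$, where $\rho_k(\theta)$ collects the inter-cluster sinusoidal couplings; applying $|\sin(\cdot)|\le 1$ together with the Lipschitz property $|\sin(\theta_z-\theta_j)-\sin(\theta_z-\theta_i)|\le |x_{ij}|$, and noting that each entry of $x_\text{intra}^{(k)}$ involves at most $2|\mc P_k|-2\le \kappa$ such differences, produces the bound
\begin{equation*}
2d_k x_\text{intra}^{(k)\transpose}X_k\rho_k(\theta)\;\le\;d_k\sum_{\ell=1}^m\gamma^{(k\ell)}\|x_\text{intra}^{(k)}\|\,\|x_\text{intra}^{(\ell)}\|,
\end{equation*}
with the diagonal entry $\gamma^{(kk)}$ absorbing the contributions that couple back to cluster $k$ itself through inter-cluster edges and the off-diagonal $\gamma^{(k\ell)}$ capturing coupling to cluster $\ell$. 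The careful accounting that produces the exact prefactor $\kappa=2\max_r|\mc P_r|-2$ is what I expect to be the main bookkeeping obstacle: it requires identifying, for each difference $\theta_j-\theta_i\in x_\text{intra}^{(k)}$, how many pairs of sinusoidal terms it feels, and verifying that the scalar $|x_{ij}|$ bound aggregates correctly when passing from a pair-indexed to a vector-indexed inequality.

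Collecting terms, I would obtain
\begin{equation*}
\dot V\;\le\;-\tfrac{1}{2}\begin{bmatrix}\|x_\text{intra}^{(1)}\|\\ \vdots\\ \|x_\text{intra}^{(m)}\|\end{bmatrix}^{\!\transpose}\!(DS+S^\transpose D)\begin{bmatrix}\|x_\text{intra}^{(1)}\|\\ \vdots\\ \|x_\text{intra}^{(m)}\|\end{bmatrix}+o(\|x_\text{intra}\|^2),
\end{equation*}
with $D=\diag(d_1,\dots,d_m)$ and $S$ exactly as defined in the statement. Since $S$ is an $M$-matrix, a standard result guarantees a positive diagonal $D$ such that $DS+S^\transpose D\succ 0$; choose such $D$. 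Then $\dot V\le -c\,\|x_\text{intra}\|^2$ for some $c>0$ on a sufficiently small ball around the origin, where the $o(\|x_\text{intra}\|^2)$ remainder from the Taylor expansion of the sine nonlinearities is dominated. Combined with the two-sided bound $c_1\|x_\text{intra}\|^2\le V\le c_2\|x_\text{intra}\|^2$, this yields local exponential decay of $\|x_\text{intra}(t)\|$, which by definition is local exponential stability of the cluster synchronization manifold $\mc S_{\mc P}$.
\end{pfof}
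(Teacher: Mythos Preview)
Your proposal is correct and follows exactly the composite Lyapunov / $M$-matrix approach of the cited reference; note that the present paper does not itself prove Theorem~\ref{thm: stability of Sp} but only recalls it, and the pieces of that argument that surface in the proof of Theorem~\ref{thm: perturb cluster sync} --- the weighted Lyapunov function $V=\sum_k d_k\,x_\text{intra}^{(k)\transpose}X_k x_\text{intra}^{(k)}$, the choice of $D=\diag(d_1,\dots,d_m)$ from the $M$-matrix property so that $DS+S^\transpose D\succ 0$, and the resulting bound $\dot V\le -\tfrac12(\|x_\text{intra}^{(1)}\|,\dots,\|x_\text{intra}^{(m)}\|)(DS+S^\transpose D)(\|x_\text{intra}^{(1)}\|,\dots,\|x_\text{intra}^{(m)}\|)^\transpose$ --- coincide with your outline.
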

\smallskip

Theorem \ref{thm: stability of Sp} shows that the cluster
synchronization manifold is stable when the intra-cluster dynamics are
sufficiently more attractive than the inter-cluster couplings.
  
\renewcommand{\baselinestretch}{.985}

\bibliographystyle{unsrt}
\bibliography{alias,FP,Main,New}

\end{document}